\newcommand{\typeof}{0} %
\newcommand{\condinc}[2]{\ifthenelse{\equal{\typeof}{0}}{#1}{#2}}
  \title{Linear Dependent Types\\ in a Call-by-Value Scenario}
  \author{Ugo Dal Lago \and Barbara Petit}
  \title{Linear Dependent Types\\ in a Call-by-Value Scenario\titlenote{This work 
      is partially supported by the INRIA ARC Project ``ETERNAL''}}
\newtheorem{theorem}{Theorem}[section]
\newtheorem{lemma}[theorem]{Lemma}
\newtheorem{proposition}[theorem]{Proposition}
  \newenvironment{proof}{\begin{trivlist}
    \item[\hskip \labelsep {\bfseries Proof.}]}{
      \hfill $\Box$ \end{trivlist}}
\newenvironment{varitemize}
{
\begin{list}{\labelitemi}
{\setlength{\itemsep}{0pt}
 \setlength{\topsep}{0pt}
 \setlength{\parsep}{0pt}
 \setlength{\partopsep}{0pt}
 \setlength{\leftmargin}{15pt}
 \setlength{\rightmargin}{0pt}
 \setlength{\itemindent}{0pt}
 \setlength{\labelsep}{5pt}
 \setlength{\labelwidth}{10pt}
}}
{
 \end{list} 
}
\newcounter{numberone}
\newenvironment{varenumerate}
{
\begin{list}{\arabic{numberone}.}
{
  \usecounter{numberone}
  \setlength{\itemsep}{0pt}
  \setlength{\topsep}{0pt}
  \setlength{\parsep}{0pt}
  \setlength{\partopsep}{0pt}
  \setlength{\leftmargin}{15pt}
  \setlength{\rightmargin}{0pt}
  \setlength{\itemindent}{0pt}
  \setlength{\labelsep}{5pt}
  \setlength{\labelwidth}{15pt}
}}
{
\end{list} 
}
\newcommand{\dlpcfn}{\ensuremath{\mathsf{d}\ell\mathsf{PCF_N}}}
\newcommand{\dlpcfv}{\ensuremath{\mathsf{d}\ell\mathsf{PCF_V}}}
\newcommand{\BLL}{\textsf{BLL}}                      % Bounded Linear Logic
\newcommand{\PCF}{\ensuremath{\mathsf{PCF}}}
\newcommand{\CEK}{\ensuremath{\mathsf{CEK}}}
\newcommand{\CEKPCF}{\ensuremath{\mathsf{CEK}_{\mathsf{PCF}}}}
\newcommand{\DML}{\ensuremath{\mathsf{DML}}}
\newcommand{\KAM}{\ensuremath{\mathsf{KAM}}}
\newcommand{\IL}{\ensuremath{\mathsf{IL}}}
\newcommand{\ILL}{\ensuremath{\mathsf{ILL}}}
\newcommand{\cek}{\ensuremath{\text{CEK}_{\mathsf{PCF}}}} % CEK abstract machine
\newcommand{\tov}{\ensuremath{→_v}}                  % reduction in cbv
\newcommand{\cbn}{\textsc{cbn}}                      % call-by-name
\newcommand{\cbv}{\textsc{cbv}}                      % call-by-value
\newcommand{\ie}{\textit{i.e.}}
\newcommand{\eg}{\textit{e.g.} }
\newcommand{\cf}{\textit{cf.} }
\renewcommand{\t}{\ensuremath{t}}          % Some term names
\newcommand{\s}{\ensuremath{s}}            % ...
\renewcommand{\u}{\ensuremath{u}}          % ...
\renewcommand{\v}{\ensuremath{v}}          % ...
\newcommand{\w}{\ensuremath{w}}            % ...
\newcommand{\ifz}[3]{\ensuremath{
    \mathtt{\ ifz\ }#1\mathtt{\ then\ }#2\mathtt{\ else\ }#3}} % term construct ifzero
\newcommand{\fix}[2][x]{\ensuremath{\mathtt{\ fix\ }#1.#2}}    % term construct fixpoint
\newcommand{\nb}[1][n]{\ensuremath{\underline{\mathtt #1}}}    % primitive numbers
\newcommand{\suc}{\ensuremath{\mathtt{s}}}                     % term construct successor
\newcommand{\pred}{\ensuremath{\mathtt{p}}}                    % term construct pred
\newcommand{\tsubst}[2][x]{\ensuremath{[#1:=#2]}}              % term substitution
\newcommand{\ms}[1]{\|#1\|}   % multipicative size of terms
\newcommand{\ts}[1]{|#1|}     % size of terms
\newcommand{\sis}[1]{|#1|}    % size of stacks
\newcommand{\sic}[1]{|#1|}    % size of closures
\newcommand{\sip}[1]{|#1|}    % size of processes
\newcommand{\termdbl}{\ensuremath{\mathsf{dbl}}}        % term encoding the doubling function
\newcommand{\termdbltwo}{\ensuremath{\mathsf{dbl2}}}    % term encoding the doubling function composed with itself
\newcommand{\termdiv}{\ensuremath{\mathsf{div}}}        % term encoding the division 
\newcommand{\termadd}{\ensuremath{\mathsf{add}}}        % term encoding the addition 
\newcommand{\termmult}{\ensuremath{\mathsf{mult}}}        % term encoding the addition 
\newcommand{\eval}[3]{#1\Downarrow^{#2}#3}
\newcommand{\one}{\ensuremath{\mathtt 1}}          % index 1
\newcommand{\zero}{\ensuremath{\mathtt 0}}         % index 0
\newcommand{\I}{\ensuremath{\mathrm{I}}}           % Some index names...
\newcommand{\J}{\ensuremath{\mathrm{J}}}           % ...
\newcommand{\K}{\ensuremath{\mathrm{K}}}           % ...
\renewcommand{\H}{\ensuremath{\mathrm{H}}}         % ...
\renewcommand{\L}{\ensuremath{\mathrm{L}}}         % ...
\newcommand{\M}{\ensuremath{\mathrm{M}}}           % ...
\newcommand{\N}{\ensuremath{\mathrm{N}}}           % ...
\newcommand{\inb}[1][n]{\ensuremath{\mathrm{#1}}}  % natural number as index
\newcommand{\f}{\ensuremath{\mathrm{f}}}           % Some function symbols
\newcommand{\g}{\ensuremath{\mathrm{g}}}           % ...
\newcommand{\mnu}{\ensuremath{-}}                  % Minus in Nat
\newcommand{\fc}[4][b]{\bigotriangleup_{#1}^{#2,#3}#4}        %forest cardinality
\newcommand{\isubst}[2][a]{\ensuremath{\{#2/#1\}}} % index substitution
\newcommand{\Dfscomb}[4]{\bigotriangleup_{#1}^{#2,#3}#4}
\newcommand{\vars}{\mathcal{V}}
\newcommand{\cvg}{\Downarrow}
\newcommand{\A}{\ensuremath{A}}           % Some basic type names
\newcommand{\B}{\ensuremath{B}}           % ...
\newcommand{\C}{\ensuremath{C}}           % ...
\newcommand{\freccia}[2]{\ensuremath{#1\multimap#2}}  % arrow type constructor
\newcommand{\mtyp}[3][a]{\ensuremath{[#1<#2]\cdot#3}} % modal type
\newcommand{\btyp}[3][a]{\ensuremath{!_{#1<#2}#3}}     % bounded banged type
\newcommand{\Nat}[2]{\ensuremath{\mathtt{Nat}[#1,#2]}}% type Nat
\newcommand{\NatU}[1]{\ensuremath{\mathtt{Nat}[#1]}}  % type Nat with 1 arg
\newcommand{\NatPCF}{\ensuremath{\mathtt{Nat}}}       % type Nat in PCF
\newcommand{\arr}{\Rightarrow} % PCF arrow
\newcommand{\trad}[1]{\ensuremath{#1^{`o}}}   % trad from lambda to ll 
\newcommand{\erase}[1]{(\!|#1|\!)}            % skeleton of a type or derivation
\newcommand{\ep}{\ensuremath{\mathcal{E}}}      % equational program
\newcommand{\fiv}{\ensuremath{`f}}              % free index variables
\newcommand{\ictx}{\ensuremath{`F}}             % index constraints
\newcommand{\ictxvar}{\ensuremath{\Psi}}        % index constraints
\newcommand{\judg}[7][\ep]{\ensuremath{#2;#3;#4\vdash_{#5}^{#1}#6:#7}} % typing judgement
\newcommand{\ejudg}[5][\ep]{\ensuremath{#2\vdash_{#3}^{#1}#4:#5}} % typing judgement with no fiv and ictx
\newcommand{\ijudg}[4][\ep]{\ensuremath{#2;#3\vDash_{#1}#4}}      % semantic judgement on idexes
\newcommand{\eijudg}[2][\ep]{\ensuremath{\vDash_{#1}#2}}          % semantic judgement with no fiv and ictx
\newcommand{\sjudg}[4][\ep]{\ensuremath{#2;#3\vdash_{#1}#4}}      % subtyping judgement
\newcommand{\tder}{\ensuremath{`q}}             % type derivation
\newcommand{\pcftder}{\ensuremath{`d}}          % type derivation for PCF
\newcommand{\proves}{\ensuremath{\rhd}}         % a type der. proves a judgment
\newcommand{\env}[1][`x]{\ensuremath{#1}}       % environment (default: ξ)
\newcommand{\envelt}[2][x]{\ensuremath{(#1\mapsto#2)}} % one assignation in and environment
\newcommand{\clo}[2][\env]{\ensuremath{\langle\,#2\,;\,#1\,\rangle}} % closure term + env
\newcommand{\cloc}[1][c]{\ensuremath{\mathbf{\sf #1}}} % one closure (default: c)
\newcommand{\clov}[1][v]{\ensuremath{\mathbf{\sf #1}}} % variable closure
\newcommand{\stack}[1][`p]{\ensuremath{#1}}            % stack 
\newcommand{\emptystack}{\ensuremath{\diamond}}
\newcommand{\stackfun}[2][\stack]{\ensuremath{\mathsf{fun}(#2\,,\,#1)}}
\newcommand{\stackarg}[2][\stack]{\ensuremath{\mathsf{arg}(#2\,,\,#1)}}
\newcommand{\stackif}[4][\stack]{\ensuremath{\mathsf{fork}(#2\,,\,#3\,,\,#4\,,\,#1)}}
\newcommand{\stackp}[1][\stack]{\ensuremath{\mathsf{p}(#1)}}
\newcommand{\stacks}[1][\stack]{\ensuremath{\mathsf{s}(#1)}}
\newcommand{\process}[1][P]{\textsc{#1}}
\newcommand{\R}{\textsc{R}}
\newcommand{\proc}[2]{\ensuremath{#1~\star~#2}}
\newcommand{\tocek}{\ensuremath{\succ}}
\newcommand{\conv}[1][n]{\ensuremath{\Downarrow^{#1}}}    % convergence of e term in n steps
\newcommand{\fifi}{$(\fiv;\ictx)$}
\newcommand{\cjudg}[6][\ep]{\ensuremath{#2;#3\vdash_{#4}^{#1}#5:#6}} % closure judgement
\newcommand{\pjudg}[6][\ep]{\ensuremath{#2;#3\vdash_{#4}^{#1}#5:#6}} % typing judgement
\newcommand{\stjudg}[7][\ep]{\ensuremath{#2;#3\vdash_{#4}^{#1}#5:(#6,#7)}} % stack judgement
\newcommand{\pcftjudg}[3]{\ensuremath{#1\vdash_{\PCF}#2:#3}} % PCF term judgement
\newcommand{\pcfpjudg}[2]{\ensuremath{\vdash_{\PCF}#1:#2}} % PCF process judgement
\newcommand{\pcfstjudg}[3]{\ensuremath{\vdash_{\PCF}#1:(#2,#3)}} % PCF process judgement
\renewcommand{\int}{\ensuremath{\mathbb{N}}}          % set of integers
\newcommand{\itp}[2][`r]{\ensuremath{\llbracket #2\rrbracket^{\ep}_{#1}}}  % index interpretation
\begin{document}

\maketitle
\begin{abstract}
  Linear dependent types~\cite{DLG11} allow to precisely capture both the extensional behaviour and the time complexity of $\lambda$-terms, 
  when the latter are evaluated by Krivine's abstract machine. In this work, we show that the same paradigm can be applied to 
  call-by-value evaluation. A system of linear dependent types for Plotkin's \PCF\ is introduced, called \dlpcfv, 
  whose types reflect the complexity of evaluating terms in the so-called \CEK\ machine. \dlpcfv\ is proved to be sound, but also relatively complete:
  every true statement about the extensional and intentional behaviour of terms can be derived, provided all true index term 
  inequalities can be used as assumptions.
\end{abstract}

% A category including the fourth, optional field follows...
\condinc{}{
\category{F.3.2}{Logics and Meaning of Programs}{Semantics of Programming Languages}[program analysis, operational semantics]

%\terms{Theory}

\keywords{Functional Programming, Linear Logic, Dependent Types,
  Implicit Computational Complexity}}

%%%%%%%%%%%%%%%%%%%%%
\section{Introduction}\label{sec:intro}
%%%%%%%%%%%%%%%%%%%%%
A variety of methodologies for formally verifying properties of programs
have been introduced in the last fifty years. Among them, \emph{type systems}
have certain peculiarities. On the one hand, the way one defines
a type system makes the task of proving a given program to have a type
reasonably simple and modular: a type derivation for a compound program
usually consists of some type derivations for the components, appropriately
glued together in a syntax-directed way (i.e. attributing a type to a program 
can usually be done \emph{compositionally}). On the other, the
specifications that can be expressed through types have traditionally been 
weak, although stronger properties have recently become of interest, such as 
security~\cite{VolpanoIS96,SabelfeldMyers03}, 
termination~\cite{BartheGR08}, monadic temporal 
properties~\cite{KobayashiO09} or resource bounds~\cite{HOAA10}.
But contrarily to what happens with other formal methods (\eg model checking or program
logics), giving a type to a program~$\t$ is a \emph{sound} but \emph{incomplete} way to
prove~$\t$ to satisfy a specification: there are correct programs which cannot 
be proved such by way of typing.

In other words, the tension between expressiveness and tractability is particularly 
evident in the field of type systems, where certain good properties the majority
of type systems enjoy (\eg syntax-directedness) are usually
considered as desirable (if not necessary), but also have their drawbacks: some specifications
are intrinsically hard to verify locally and compositionally.
One specific research field in which the just-described scenario 
manifests itself is complexity analysis, in which the specification
takes the form of concrete or asymptotic bounds on the complexity
of the underlying program. Many type systems have been introduced
capturing, for instance, the class of polynomial time computable
functions~\cite{Hofmann99a,BaillotT09,BaillotGaboardiMogbil09esop}. 
All of them, under mild assumptions, can be employed
as tools to certify programs as asymptotically time efficient. 
However, a tiny slice of the polytime \emph{programs} are generally typable, since
the underlying complexity class $\mathbf{FP}$ is only characterised 
in a purely extensional sense --- for every function in $\mathbf{FP}$ 
there is \emph{at least one} typable program computing it.

Gaboardi and the first author have recently introduced~\cite{DLG11} 
a type system for Plotkin's \PCF, called \dlpcfn, in which linearity 
and a restricted form of dependency in the spirit of Xi's \DML\ are
present:
\begin{itemize}
\item
  \textbf{Linearity} makes it possible to finely control the number of times
  subterms are copied during the evaluation of a term~$\t$, 
  itself a parameter which accurately reflects the time complexity of~$\t$~\cite{DalLago09a}.
\item
  \textbf{Dependency} allows to type distinct (virtual) copies of a term with
  distinct types. This gives the type system an extra flexibility
  similar to that of intersection types.
\end{itemize}
When mixed together, these two ingredients allow to precisely capture the
extensional behaviour of $\lambda$-terms \emph{and} the time complexity of their
evaluation by Krivine's abstract machine. Both soundness and relative
completeness hold for \dlpcfn.

One may argue, however, that the practical relevance of these results is
quite limited, given that call-by-name evaluation and \KAM\ are very inefficient:
why would one be interested in verifying the complexity of evaluating concrete
programs in such a setting?

In this work, we show that linear dependent types can also
be applied to the analysis of call-by-value evaluation of functional programs. 
This is done by introducing another system of linear dependent types for Plotkin's \PCF. 
The system, called \dlpcfv, captures the complexity
of evaluating terms by Felleisen and Friedman's \CEK\ machine~\cite{FelleisenF87}, 
a simple abstract machine for call-by-value evaluation. \dlpcfv\ is proved 
to enjoy the same good properties enjoyed by its sibling \dlpcfn, namely soundness
and relative completeness: every true statement about the extensional
behaviour of terms can be derived, provided all true index term 
inequalities can be used as assumptions.

Actually, \dlpcfv\ is not merely a variation on \dlpcfn: not only typing
rules are different, but also the language of types itself must be modified.
Roughly, \dlpcfv\ and \dlpcfn\ can be thought as being
induced by translations of intuitionistic logic into linear logic: the latter
corresponds to Girard's translation $A\arr B\equiv !A\multimap B$,
while the former corresponds to $A\arr B\equiv !(A\multimap B)$.
The strong link between translations of \IL\ into \ILL\ and notions of reduction
for the $\lambda$-calculus is well-known (see \eg\cite{MaraistOTW95}) and has been a guide
in the design of \dlpcfv\ (this is explained in Section.~\ref{sec:cbvcbn}). 

\condinc{}{An extended version with more details and proofs is available~\cite{EV}.}

%%% Local Variables: 
%%% mode: latex
%%% TeX-master: "main"
%%% End: 

%%%%%%%%%%%%%%%%%%%%%%%%%%%%%%%%%%%%%%%%%%%%%%
\section{Linear Dependent Types,\\ Intuitively}
\label{sec:intui}
%%%%%%%%%%%%%%%%%%%%%%%%%%%%%%%%%%%%%%%%%%%%%%
Consider the following program:
$$
\termdbl=\fix[f]{`lx.\ifz{x}{x}{\suc(\suc(f(\pred(x))))}}.
$$
In a type system like \PCF~\cite{plotkin77tcs}, the term~\termdbl\ receives type $\mathsf{Nat}\arr\mathsf{Nat}$. 
As a consequence, \termdbl\ computes a function on natural numbers without ``going wrong'':
it takes in input a natural number, and (possibly) produces in output another natural number.
The type $\mathsf{Nat}\arr\mathsf{Nat}$, however, does not give any information about 
\emph{which} specific function on the natural numbers~\termdbl\ computes. 

Properties of programs which are completely ignored by ordinary type systems are 
termination and its most natural refinement, namely termination in \emph{bounded time}.
Typing a term $\t$  with $\mathsf{Nat}\arr\mathsf{Nat}$ does not guarantee that $t$, when
applied to a natural number, terminates. Consider, as another example, a slight modification of $\termdbl$, namely
$$
\termdiv=\fix[f]{`lx.\ifz{x}{x}{\suc(\suc(f(x)))}}.
$$
It behaves as $\termdbl$ when fed with $0$, but it diverges when it receives a positive natural
number as an argument. But look: $\termdiv$ is not so different from $\termdbl$. Indeed, the second can
be obtained from the first by feeding not $x$ but $\pred(x)$ to~$f$.
And any type system in which $\termdbl$ and $\termdiv$ are somehow recognised as being 
fundamentally different must be able to detect the presence of $\pred$ in $\termdbl$ and deduce termination
from it.
Indeed, sized types \cite{BartheGR08} and dependent types \cite{Xi01} are able to do so.
Going further, we could ask the type system to be able not only to guarantee termination, 
but also to somehow evaluate the time or space consumption of programs.
For example, we could be interested in knowing that $\termdbl$ takes a polynomial number 
of steps to be evaluated on any natural number, and actually some type systems able to control 
the complexity of higher-order programs exist. Good examples are type systems for amortised 
analysis \cite{HOAA10,HAH11} or those using ideas from linear logic \cite{BaillotT09,BaillotGaboardiMogbil09esop}:
in all of them, linearity plays a key role.

\dlpcfn~\cite{DLG11} combines some of the ideas presented above with the principles of bounded linear logic (\BLL~\cite{GirardSS92}):
the cost of evaluating a term is measured by counting how many times function arguments 
need to be copied during evaluation, and different copies can be given distinct, although uniform, types.
Making this information explicit in types permits to compute the cost step by step during
the type derivation process. Roughly, typing judgements in \dlpcfn\ are statements like
$$
\vdash_{\J(a)}\t\ :\ !_{\inb}\ \mathsf{Nat}[a]\multimap\mathsf{Nat}[\I(a)],
$$
where $\I$ and $\J$ depend on $a$ and $n$ is a natural number capturing the number of
times $\t$ uses its argument. But this is not sufficient:
analogously to what happens in \BLL, \dlpcfn\ makes types more parametric.
A type like $!_{\inb}\ `s\multimap `t$ is replaced by the 
more parametric type $!_{a<\inb}`s\multimap`t$, which tells us
that the argument will be used~$n$ times, and each instance has type~$`s$ 
\emph{where, however} the variable~$a$ is instantiated with a value less than~$n$.
This allows to type each copy of the argument differently but uniformly, since all 
instances of~$`s$ have the same \PCF\ skeleton.
This form of \emph{uniform linear dependence} is actually crucial in obtaining
the result which makes \dlpcfn\ different from similar type systems, namely completeness.
As an example, $\termdbl$ can be typed as follows in \dlpcfn:
$$
\vdash_{a}^{\ep}\termdbl:
\btyp[b]{a+1}{\mathsf{Nat}[a]}\multimap\mathsf{Nat}[2\times a].
$$
This tells us that the argument will be used $a$ times by $\termdbl$, namely a number of times
equal to its value. And that the cost of evaluation will be itself proportional to $a$. 

%%%%%%%%%%%%%%%%%%%%%%%%%%%%%%%%%%%%%%%%%
\subsection{Why Another Type System?}
%%%%%%%%%%%%%%%%%%%%%%%%%%%%%%%%%%%%%%%%%
The theory of $`l$-calculus is full of interesting results, one of them being the so-called Church-Rösser property:
both $`b$ and $`b`h$ reduction are confluent, \ie if you fire two distinct redexes 
in a $`l$-term, you can always ``close the diagram'' by performing one \emph{or more} rewriting steps.
This, however, is not a \emph{local} confluence result, and as such does \emph{not} imply that all reduction strategies are 
computationally equivalent. Indeed, some of them are normalising (like normal-order evaluation) while some others are 
not (like innermost reduction). But how about efficiency? 

On the one hand, it is well known that optimal reduction \emph{is} indeed possible~\cite{Lamping90}, even if
it gives rise to high overheads~\cite{AspertiMairson98}. On the other, call-by-name can be highly inefficient.
Consider, as an example, the composition of $\termdbl$ with itself:
$$
\termdbltwo=`lx.\termdbl(\termdbl\ x).
$$
This takes quadratic time to be evaluated in the \KAM: the evaluation of $(\termdbl\ \nb)$ is repeated
a linear number of times, whenever it reaches the head position. This actually \emph{can} be seen from within \dlpcfn,
since 
$$
\vdash_{\J}^{\ep}\termdbltwo:!_{b<\I}\mathsf{Nat}[a]\multimap\mathsf{Nat}[4\times a].
$$
where both $\I$ and $\J$ are quadratic in $a$. Call-by-value solves this problem, at the price of not being normalising. 
Indeed, eager evaluation of $\termdbltwo$ when fed with a natural number~$n$ takes linear time in~$n$.
The relative efficiency of call-by-value evaluation, compared to call-by-name, is not a novelty: many modern functional
programming languages (like \textsf{OCaml} and \textsf{Scheme}) are based on it, while very few of them evaluate terms in call-by-name
order. 

For the reasons above, we strongly believe that designing a type system in the style of \dlpcfn, but able to deal with eager
evaluation, is a step forward applying linear dependent types to actual programming languages. 

%%%%%%%%%%%%%%%%%%%%%%%%%%%%%%%%%%%%%%%%%%%%%%%%%%%%%%%%%%%
  \subsection{Call-by-Value, Call-by-Name and Linear Logic}
\label{sec:cbvcbn}
%%%%%%%%%%%%%%%%%%%%%%%%%%%%%%%%%%%%%%%%%%%%%%%%%%%%%%%%%%%
Various notions of evaluation for the $\lambda$-calculus can be seen as translations of intuitionistic
logic (or of simply-typed $\lambda$-calculi) into Girard's linear logic. This correspondence has been
investigated in the specific cases of call-by-name (\cbn) and call-by-value (\cbv) reduction (\eg see the work
of Maraist et al.~\cite{MaraistOTW95}). In this section, we briefly introduce the main ideas behind 
the correspondence, explaining why linear logic has guided the design of \dlpcfv.
%The choice of a given evaluation strategy can be carried in the type system, through the call-by-name or call-by-value 
%translation of simply typed $`l$-calculus into \LIN, the calculus corresponding to linear logic~\cite{MaraistOTW95}.
%This calculus involves an explicitly non-linear substitution, which enables a fine control on the $`b$-reduction:
%the argument must be banged in order to substitute the bound variable.

The general principle in such translations, is to guarantee that whenever a term \emph{can} possibly be
duplicated, it must be mapped to a box in the underlying linear logic proof.
In the \cbn\ translation (also called Girard's translation), \emph{any} argument to functions
can possibly be substituted for a variable and copied, so arguments are banged during the translation:
\begin{displaymath}
  (A"=>"B)^*~=~(!A^*)\multimap B^*
\end{displaymath}
Adding the quantitative bound on banged types (as explained in the previous section) gives rise to 
the type $\freccia{(\btyp{\I}{`s})}{`t}$ for functions (written $\freccia{[a<\I]`.`s}{`t}$ in~\cite{DLG11}).
In the same way, \emph{contexts} are banged in the \cbn\ translation: a typing judgement in \dlpcfn\ have
the following form:
\begin{displaymath}
  x_1:\,\btyp[a_1]{\I_1}{`s_1},\dots,x_n:\,\btyp[a_n]{\I_n}{`s_n}
  ~\vdash_{\J}~
  \t:`t.
\end{displaymath}
In the \cbv\ translation, $`b$-reduction should be performed only if the argument is a value.
Thus, arguments are not automatically banged during the translation but values are, so that the $`b$-reduction 
remains blocked until the argument reduces to a value. In the $\lambda$-calculus values are functions, hence 
the translation of the intuitionistic arrow becomes
\begin{displaymath}
  (A"=>"B)^{`o}~=~!(A^{`o}\multimap B^{`o}).
\end{displaymath}
Function types in \dlpcfv\ then become $\btyp{\I}{(\freccia{`s}{`t})}$, and a judgement has the form
\begin{math}
  x_1:`s_1,\dots,x_n:`s_n~\vdash_{\J}~\t:`t.
\end{math}
The syntax of types varies fairly much between \dlpcfn\ to \dlpcfv, and consequently
the two type systems are different, although both of them are greatly inspired by linear logic.

In both cases, however, the ``target'' of the translation is not the whole of \ILL, but rather a restricted
version of it, namely \BLL, in which the complexity of normalisation is kept under control by shifting from
unbounded, infinitary, exponentials to finitary ones.
For example, the \BLL\ \textit{contraction} rule allows to merge the first~\I\ copies of~$\A$, and the following~\J\ ones into 
the first~$\I+\J$ copies of~$A$:
%\begin{displaymath}
%  \frac{`G,x:\,\btyp{\I}{\A},x:\,\btyp{\J}{\A\isubst{\I+a}}\vdash\t:\B}
%    {`G,x:\,\btyp{\I+\J}{\A}\vdash\t:\B}
%\end{displaymath}
\begin{displaymath}
  \frac{`G,\btyp{\I}{\A},\btyp{\J}{\A\isubst{\I+a}}\vdash\B}
    {`G,x:\,\btyp{\I+\J}{\A}\vdash\B}
\end{displaymath}
%(for the reader familiar with~\BLL, we remove here the implicit subtyping for the sake of clarity).
We write $`s\uplus`t=\btyp{\I+\J}{\A}$ if $`s=\btyp{\I}{\A}$ and \mbox{$`t=\btyp{\J}{\A\isubst{\I+a}}$}. % (\cf Sec.~\ref{sec:typ}).
Any time a contraction rule is involved in the \cbv\ translation of a type derivation, a sum~$\uplus$ appears at the same place in the corresponding \dlpcfv\ derivation.
Similarly, the \textit{dereliction} rule allows to see any type as the
first copy of itself:
\begin{displaymath}
  \frac{`G,\A\isubst{\zero}\vdash\B}{
  `G,\btyp{\one}{\A}\vdash\B}
\end{displaymath}
hence any dereliction rule appearing in the translation of a typing judgement tells us that the corresponding type is copied once.
Both contraction and dereliction appear while typing an application in \dlpcfv: the \PCF\ typing rule
\begin{displaymath}
    \frac{`G~\vdash~\t:~\A⇒\B\quad`G~\vdash~\u:~\A
    }{`G~\vdash~\t\u:~\B}
\end{displaymath}
corresponds to the following \ILL\ proof:
$$
  \def\defaultHypSeparation{\hskip 5pt}
  \def\ScoreOverhang{4pt}
  \AxiomC{$\scriptstyle
    z:~\trad{\A}\multimap\trad{\B}\vdash     z:~\trad{\A}\multimap\trad{\B}$}
  \RightLabel{$ ^{\mathit{der}}$}
  \UnaryInfC{$\scriptstyle
    !z:~!(\trad{\A}\multimap\trad{\B})\vdash  z:~\trad{\A}\multimap\trad{\B}$}
  \AxiomC{$\scriptstyle
    \trad{`G}\vdash \trad{\t}:~!(\trad{\A}\multimap\trad{\B})$}
  \BinaryInfC{$\scriptstyle
    \trad{`G}\vdash \trad{\t}:~\trad{\A}\multimap\trad{\B}$}
  \AxiomC{$\scriptstyle
    \trad{`G}\vdash \trad{\u}:~\trad{\A}$}
  \BinaryInfC{$\scriptstyle
    \trad{`G},~\trad{`G}\vdash \trad{\t}\trad{\u}:~\trad{\B}$}
  \RightLabel{$ ^\mathit{contr}$}
  \LeftLabel{~$\scriptstyle ^{(\trad{`G}=!`G')}$}
  \UnaryInfC{$\scriptstyle
    \trad{`G}\vdash \trad{\t}\trad{\u}:~\trad{\B}$}
  \DisplayProof
$$
which becomes the following, when appropriately decorated according to the principles
of~\BLL\ (writing~$\A_0$ and~$\B_0$ for $\A\isubst{\zero}$ and $\B\isubst{\zero}$):
$$
  \def\defaultHypSeparation{\hskip 1pt}
  \def\ScoreOverhang{1pt}
  \AxiomC{$\scriptstyle
    z:~\trad{\A_0}\multimap\trad{\B_0}\vdash     z:~\trad{\A_0}\multimap\trad{\B_0}$}
  \RightLabel{$ ^{\mathit{der}}$}
  \UnaryInfC{$\scriptstyle
    !z:~\btyp{\one}{(\trad{\A}\multimap\trad{\B})}\vdash  z:~\trad{\A_0}\multimap\trad{\B_0}$}
  \AxiomC{$\scriptstyle
    \trad{`G}\vdash \trad{\t}:~\btyp{\one}{(\trad{\A}\multimap\trad{\B})}$}
  \BinaryInfC{$\scriptstyle
    \trad{`G}\vdash \trad{\t}:~\trad{\A_0}\multimap\trad{\B_0}$}
  \AxiomC{$\scriptstyle
    \trad{`G}\vdash \trad{\u}:~\trad{\A_0}$}
  \BinaryInfC{$\scriptstyle
    \trad{`G},~\trad{`G}\vdash \trad{\t}\trad{\u}:~\trad{\B_0}$}
  \RightLabel{$ ^\mathit{contr}$}
  \LeftLabel{~$\scriptstyle ^{(\trad{`G}=!`G')}$}
  \UnaryInfC{$\scriptstyle
    \trad{`G}\uplus\trad{`G}\vdash \trad{\t}\trad{\u}:~\trad{\B_0}$}
  \DisplayProof
$$
This \cbv\ translation of the application rule hence leads to the typing rule for applications in \dlpcfv:
\begin{displaymath}
  \frac{`G\vdash_{\K}\t:\btyp{\one}{(\freccia{`s}{`t})}\quad
  `D\vdash_{\H}\u:`s\isubst{\zero}}{
  `G\uplus`D\vdash_{\K+\H}\t\u:`t\isubst{\zero}}
\end{displaymath}
%Notice that, the cost aiming at counting the total number of substitutions in the execution of the term, 
%the sum of the number of copies in each context of the premises entails the sum of both costs.
The same kind of analysis enables to derive the typing rule for abstractions (whose call-by-value 
translation requires the use of a promotion rule) in \dlpcfv:
\begin{displaymath}
  \frac{`G,x:`s\vdash_{\K}\t:`t}{\sum_{a<\I}`G\vdash_{\I+\sum_{a<\I}\K}
    `lx.\t:\btyp{\I}{(\freccia{`s}{`t})}}
\end{displaymath}
One may wonder what~\I\ represents in this typing rule, and more generally in a judgement such as
\begin{displaymath}
  `G\vdash_{\K}\t:~\btyp{\I}{\A}.
\end{displaymath}
This is actually the main new idea of \dlpcfv:
such a judgement intuitively means that the value to which~\t\ reduces will be used~\I\ times by the environment.
If~\t\ is applied to an argument~\u, then~\t\ must reduce to an abstraction~$`lx.\s$, that is destructed 
by the argument without being duplicated. In that case, $\I=\one$, as indicated by the application typing rule.
On the opposite, if~\t\ is applied to a function~$`lx.\u$, then the type of this function must be of the form
(up to a substitution of~$b$)
$\btyp[b]{\one}{(\freccia{\btyp{\I}{\A}}{`t})}$. This means that~$`lx.\u$ 
uses~\I\ times its arguments, or, that~$x$ can appear at most~\I\ times in the reducts of~\u.

This suggests that the type derivation of a term is not unique in general:
whether a term~\t\ has type~$\btyp{\I}{\A}$ or~$\btyp{\J}{\A}$ depends on the use we want to make of~\t.
This intuition will direct us in establishing the typing rules for the other \PCF\ constructs
(namely conditional branching and fixpoints).

%%% Local Variables: 
%%% mode: latex
%%% TeX-master: "main"
%%% End: 

%%%%%%%%%%%%%%%%%%%%%%%%%%%
\section{\dlpcfv, Formally}
\label{sec:formal}
%%%%%%%%%%%%%%%%%%%%%%%%%%%
In this section, the language of programs and a type system \dlpcfv\ for it will be introduced formally.
While programs are just terms of a fairly standard $\lambda$-calculus (which is very similar to Plotkin's
\PCF), types may include so-called \emph{index terms}, which are first-order
terms denoting natural numbers by which one can express properties about the extensional and intentional behaviour
of programs. 

\subsection{Index Terms and Equational Programs}
\label{sec:index}

Syntactically, index terms are built either from function symbols from a given untyped signature $\Theta$ or
by applying any of two special term constructs:
%Formally, a \emph{signature} $\Sigma$ is a pair $(\carone,\arone)$ where
%$\carone$ is a finite set of \emph{function symbols}
%and $\arone:\carone\rightarrow\NN$ assigns an \emph{arity}
%to every function symbol. Index terms on a given signature $\Sigma=(\carone,\arone)$ are generated by the following grammar:
\begin{displaymath}
  \I,\J,\K \quad::=\quad
  a~~|~~\f(\I_1,\ldots,\I_{n}) 
  ~~|~~\displaystyle{\sum_{a< \I}\J}~~|~~ \displaystyle{\Dfscomb{a}{\I}{\J}{\K}}.
\end{displaymath}
Here, $\f$ is a symbol of arity $n$ from $\Theta$ and $a$ is a variable drawn from a set $\vars$ of \emph{index variables}.
We assume the symbols $0$, $1$ (with arity $0$) and $+$, $\mnu$ (with arity $2$) are always part of $\Theta$.
An index term in the form $\sum_{a< \I}\J$ is a \emph{bounded sum}, while one in the form $\Dfscomb{a}{\I}{\J}{\K}$
is a \emph{forest cardinality}.
For every natural number $n$, the index term $\inb$ is just $\underbrace{1+1+\ldots+1}_{\mbox{$n$ times}}$.

Index terms are meant to denote natural numbers, possibly depending on the (unknown) values of variables.
Variables can be instantiated with other index terms, \eg $\I\isubst{\J}$. 
So, index terms can also act as first order functions. 
What is the meaning of the function symbols from $\Theta$? It is the one induced by  an equational program $\ep$.
Formally, an \emph{equational program} $\ep$ over a signature $\Theta$ 
is a set of equations in the form $\I=\J$ where both $\I$ and $\J$
are index terms. We are interested in equational programs guaranteeing that, whenever
symbols in $\Theta$ are interpreted as partial functions
over $\int$ and $0$, $1$, $+$ and $\mnu$ are interpreted in the
usual way, the semantics of any function symbol 
$f$ can be uniquely determined from $\ep$.
This can be guaranteed by, for example, taking $\ep$ as
an Herbrand-G\"odel scheme~\cite{Odifreddi} or as an orthogonal constructor
term rewriting system~\cite{BaaderNipkow}. 
The definition of index terms is parametric on $\Theta$ and $\ep$:
this way one can tune our type system from a highly undecidable but truly 
powerful machinery down to a tractable but less expressive formal system. 

What about the meaning of bounded sums and forest cardinalities? The first is very intuitive:
the value of $\sum_{a< \I}\J$ is simply the sum of all
possible values of $\J$ with $a$ taking the values from $0$ up to
$\I$, excluded. Forest cardinalities, on the other hand, require some effort to be described.
Informally, $\Dfscomb{a}{\I}{\J}{\K}$ is an
index term denoting the number of nodes in a forest composed of $\J$ trees described using $\K$.
All the nodes in the forest are (uniquely) identified by natural numbers.
These are obtained by consecutively visiting each tree in pre-order, starting from $\I$.
The term~$\K$ has the role of describing the number of children of each forest node, 
\eg the number of children of the node $\zero$ is $\K\isubst{\zero}$.
More formally, the meaning of a forest cardinality is defined by the following two equations:
\begin{align*}
%  \label{eqn:fcbase}
  \Dfscomb{a}{\I}{0}{\K}&=0 \\
%  \label{eqn:fcind}
  \Dfscomb{a}{\I}{\J+1}{\K}&=
  \left(\Dfscomb{a}{\I}{\J}{\K}\right)+1+
  \left(\Dfscomb{a}{\I+1+\Dfscomb{a}{\I}{\J}{\K}}
    {\K\isubst{\I+\Dfscomb{a}{\I}{\J}{\K}}}{\K}\right)
\end{align*}
The first equation says that a forest of $0$ trees contains no nodes.
The second one tells us that a forest of $\J+1$ trees contains:
\begin{itemize} 
\item
  The nodes in the first $\J$ trees;
\item
  plus the nodes in the last tree, which are just one plus the nodes   in the immediate 
  subtrees of the root, considered themselves as a forest.
\end{itemize}
To better understand forest cardinalities, consider the following forest comprising two trees:
$$
\scalebox{0.8}{
  \xymatrix{
    & &  0          &   \\
    & & 1\ar@{-}[u] &   \\
    \empty&2\ar@{-}[ur] &5\ar@{-}[u] & 6\ar@{-}[ul]\\
    3\ar@{-}[ur]& &  4\ar@{-}[ul]&7\ar@{-}[u]
    \save "4,1"."3,4"*[F.]\frm{--}\restore 
  }

  \xymatrix{
    & & 8 &   \\
    & 9\ar@{-}[ur] & & 11\ar@{-}[ul]\\
    & 10\ar@{-}[u] & & 12\ar@{-}[u] \\
  }
}
$$
It is well described by an index term $\K$ with a free index variable $a$ such that 
$\K\isubst{1}=3$;
$\K\isubst{n}=2$ for $n\in\{2,8\}$;
$\K\isubst{n}=1$ when
$n\in\{0,6,9,11\}$;
and
$\K\isubst{n}=0$ when
$n\in\{3,4,5,7,10,12\}$. That is, $\K$ describes the number of children of each node. Then 
$\Dfscomb{a}{\zero}{2}{\K}=13$ since it takes into account the entire forest;
$\Dfscomb{a}{\zero}{1}{\K}=8$ since it takes into account only the leftmost tree;
$\Dfscomb{a}{8}{1}{\K}=5$ since it takes into account only the second tree of the forest;
finally, 
$\Dfscomb{a}{2}{3}{\K}=6$ since it takes into account only the three trees (as a forest) 
within the dashed rectangle.

One may wonder what is the role of forest cardinalities in the type system. Actually, 
they play a crucial role in the treatment of recursion, where the unfolding of recursion
produces a tree-like structure whose size is just the number of times the (recursively
defined) function will be used \emph{globally}. 
Note that the value of a forest cardinality could also be undefined. 
For instance, this happens when infinite trees, corresponding to 
diverging recursive computations, are considered.

The expression $\itp{\I}$ denotes the meaning of $\I$,
defined by induction along the lines of the previous discussion, 
where $`r:\vars\to\int$ is an assignment and~$\ep$ is an equational program giving meaning to the function symbols in~$\I$.
Since~$\ep$ does not necessarily interpret such symbols as \emph{total} functions, and moreover, the value of a forest cardinality can be undefined, $\itp{\I}$ can be undefined itself.
A \emph{constraint} is an inequality in the form $\I\leq\J$.
Such a constraint is \emph{true} \condinc{(or \emph{satisfied})}{} in an assignment~$`r$
%\condinc{defined on~\fiv\ (notation: \valid{`r}{\fiv}{\I\leq\J})}{}
if~$\itp{\I}$ and~$\itp{\J}$ are \emph{both} defined and the first is smaller or equal to the latter.
Now, for a subset~$\fiv$ of~$\vars$, and for a set~$\ictx$ of constraints involving variables in~$\fiv$, the expression
\begin{displaymath}
  \ijudg{\fiv}{\ictx}{\I\leq\J}
\end{displaymath}
denotes the fact that the truth of $\I\leq\J$ \emph{semantically} follows from the truth of the constraints in $\ictx$. 
To denote that~\I\ is well defined for~\ep\ and any valuation~$`r$ satisfying~\ictx, we may write \ijudg{\fiv}{\ictx}{\I`|} instead of \ijudg{\fiv}{\ictx}{\I\leq\I}.

\subsection{Programs}
\label{sec:prg}
\emph{Values} and \emph{terms} are generated by the following grammar:
\begin{displaymath}
  \begin{array}{c@{\quad}rl}
    \text{Values:} & \v, \w ::= & \nb~|~`lx.\t ~|~\fix{t} \\
    \text{Terms:} & \s,\t,\u ::= & 
    x~|~\v~|~\t\u~|~\suc(\t)~|~\pred(\t) \\
    &&|\ifz{\t}{\u}{\s}  
  \end{array}
\end{displaymath}
Terms can be typed with a well-known type system called \PCF:
types are those generated by the basic type $\NatPCF$ and the binary type constructor $\arr$.
Typing rules are standard (see~\cite{EV}). 
A notion of (weak) call-by-value reduction $\tov$ can be easily defined: take the
rewriting rules in Figure~\ref{fig:cbv-red} and close them under all applicative contexts.
%%%%%%%%%%%%%%%%%%%%%%%%%%%%%%%%%%%%%%%%%%%%%%%%%%%%%%%%%%%%
\begin{figure}                                             %
  \begin{center}
    \condinc{\fbox{\begin{minipage}{0.75\textwidth}
          $$
          \begin{array}{r@{~~\tov~~}l}
            (`lx.\t)~\v & \t\tsubst{\v}\\
            \suc(\nb)& \nb[n+1]\\
            \pred(\nb[n+1])& \nb\\
            \pred(\nb[0])& \nb[0]\\
            \ifz{\nb[0]}{\t}{\u} & \t\\
            \ifz{\nb[n+1]}{\t}{\u} & \u\\
            (\fix{\t})~\v & (\t\tsubst{\fix{\t}})~\v\\      
          \end{array}
          $$
        \end{minipage}}
    }{
      $$
      \begin{array}{r@{~~\tov~~}l}
        (`lx.\t)~\v & \t\tsubst{\v}\\
        \suc(\nb)& \nb[n+1]\\
        \pred(\nb[n+1])& \nb\\
        \pred(\nb[0])& \nb[0]\\
        \ifz{\nb[0]}{\t}{\u} & \t\\
        \ifz{\nb[n+1]}{\t}{\u} & \u\\
        (\fix{\t})~\v & (\t\tsubst{\fix{\t}})~\v\\      
      \end{array}
      $$
    }
  \end{center}
  \caption{Call-by-value reduction of \PCF\ terms.}
  \label{fig:cbv-red}
\end{figure}                                               %
%%%%%%%%%%%%%%%%%%%%%%%%%%%%%%%%%%%%%%%%%%%%%%%%%%%%%%%%%%%%
A term $\t$ is said to be a \emph{program} if it can be given the \PCF\ type 
$\NatPCF$ in the empty context. 
The \emph{multiplicative size} $\ms{\t}$ of a term $\t$ is defined as follows:
\begin{displaymath}
  \begin{array}[t]{r@{~=~}l}
  \ms{\nb}=\ms{`lx.\t}=\ms{\fix{t}}&0~;\\
  \ms{x}&2~;\\
  \ms{\t\u}&\ms{\t}+\ms{\u}+2~;\\
  \ms{\suc(\t)}&\ms{\t}+2\\
  \ms{\pred(\t)}&\ms{\t}+2~;\\
  \ms{\ifz{\t}{\u}{\s}}&\ms{\t}+\ms{\u}+\ms{\s}+2.
\end{array}
\end{displaymath}

Notice that the multiplicative size of a term~\t\ is less or equal than its
size $\ts{\t}$ (which is defined inductively, similarly to $\ms{\t}$, except 
for values: $\ts{\nb}=2$, and $\ts{\fix{\t}}=\ts{`lx.\t}=\ts{\t}+2$).
Values are not taken into account by the multiplicative size.
Indeed, the evaluation of terms (\cf\ Section~\ref{sec:cek}) consists first in \emph{scanning} a term until a value is reached (and the cost of this step is measured by the multiplicative size).
Then this value is either destructed (\eg\ when a lambda abstraction is given an argument), either duplicated (\eg\ when it is itself an argument of a lambda abstraction).
The cost of this second step will be measured by the type system~\dlpcfv.

\subsection{The Type System}
\label{sec:typ}

\paragraph{The Language of Types}
%%%%%%%%%%%%%%%%%%%%%%%%%%%%%%%%%%%%%%%%%%%%%%%%%%%%%%%%%%%%
\begin{figure*}                                            %
  \centering
  \condinc{\fbox{
      \begin{minipage}{.97\textwidth}
        \input{fig-subtype}
      \end{minipage}}
  }{
    \input{fig-subtype}
    \vspace{5pt}
  }
  \caption{Subtyping derivation rules of \dlpcfv.}
  \label{fig:tsub}
\end{figure*}                                              %
%%%%%%%%%%%%%%%%%%%%%%%%%%%%%%%%%%%%%%%%%%%%%%%%%%%%%%%%%%%%
%
%%%%%%%%%%%%%%%%%%%%%%%%%%%%%%%%%%%%%%%%%%%%%%%%%%%%%%%%%%%%
\begin{figure*}                                            %
  \condinc{
    \fbox{
      \begin{minipage}{.97\textwidth}
        \input{fig-type}
      \end{minipage}}
  }{
    \input{fig-type}
  }
  \caption{Typing rules of \dlpcfv.}
  \label{fig:typ}
\end{figure*}                                              %
%%%%%%%%%%%%%%%%%%%%%%%%%%%%%%%%%%%%%%%%%%%%%%%%%%%%%%%%%%%%
The type system \dlpcfv\ can be seen as a refinement of \PCF\ obtained by a linear decoration of its type derivations. 
\emph{Linear} and \emph{modal types} are defined as follows:
\begin{align*}
\A, \B &::= \quad\freccia{`s}{`t} &\mbox{linear types}\\
`s,`t &::= \quad\mtyp{\I}{\A}~~|~~\Nat{\I}{\J} &\mbox{modal types}
\end{align*}
where $\I,\J$ range over index terms and $a$ ranges over index variables.
Modal types need some comments. Natural numbers are freely duplicable, so
$\Nat{\I}{\J}$ is modal by definition. As a first approximation, $\mtyp{\I}{\A}$ can be thought of 
as a universal quantification of $\A$, and so $a$ is bound in the linear type $\A$. 
Moreover, the condition $a<\I$ says that~$`s$ consists of all the instances of the linear type $\A$ 
where the variable~$a$ is successively instantiated with the values from $\zero$ to 
$\I\mnu\one$, \ie\ $\A\isubst{\zero},\ldots,\A\isubst{\I-\one}$. For those readers who are familiar 
with linear logic, and in particular with \BLL, the modal type $\mtyp{\I}{\A}$ is a generalisation 
of the \BLL\ formula $\btyp{p}{\A}$ to arbitrary index terms. As such it can be thought of as 
representing the type $\A\isubst{\zero}\otimes\cdots\otimes\A\isubst{\I-\one}$.
$\NatPCF[\I]$ is syntactic sugar for $\Nat{\I}{\I}$.
In the typing rules we are going to define, modal types need to be manipulated in an algebraic way.
For this reason, two operations on modal types need to be introduced.
The first one is a binary operation~$\uplus$ on modal types.
Suppose that $`s=\mtyp{\I}{\A\isubst[c]{a}}$
and that $`t=\mtyp[b]{\J}{\A\isubst[c]{\I+b}}$.
\condinc{
In other words, $`s$ consists of the first $\I$ instances of $\A$, \ie\ 
$\A\isubst[c]{\zero},\ldots,\A\isubst[c]{\I-\one}$ while $`t$ consists of the next $\J$ instances of $\A$,
\ie\ $\A\isubst[c]{\I+\zero},\ldots,\A\isubst[c]{\I+\J-\one}$.}{} 
Their \emph{sum} $`s\uplus`t$ is naturally defined as a modal type consisting of the first $\I+\J$ instances of $\A$, 
\ie\ $\mtyp[c]{\I+\J}{\A}$. Furthermore, $\Nat{\I}{\J}\uplus\Nat{\I}{\J}$ is just $\Nat{\I}{\J}$.
An operation of bounded sum on modal types can be defined by generalising the idea above:
suppose that 
$$
`s=\mtyp[b]{\J}{\A\isubst[c]{b+\sum_{d<a}\J\isubst{d}}}.
$$
Then its \emph{bounded sum} $\sum_{a<\I}`s$ is just $\mtyp[c]{\sum_{a<\I}\J}{\A}$.
Finally, $\sum_{a<\I}\Nat{\J}{\K}=\Nat{\J}{\K}$, provided $a$ is not free in $\J$ nor in 
$\K$.
\paragraph{Subtyping}
Central to \dlpcfv\ is the notion of subtyping.
An inequality relation $`<$ between (linear or modal) types can be defined using the formal system in Figure~\ref{fig:tsub}. 
This relation corresponds to lifting index inequalities at the type level.
Please observe that $`<$ is a pre-order, \ie, a reflexive and transitive relation.
\paragraph{Typing}
A typing judgement is of the form
\begin{displaymath}
  \judg{\fiv}{\ictx}{`G}{\K}{\t}{`t}~,
\end{displaymath}
where~\K\ is the \emph{weight} of~\t, that is (informally) the maximal number of 
substitutions involved in the \cbv\ evaluation of~\t.
\ictx\ is a set of constraints (\cf Section~\ref{sec:index}) that we call the 
\emph{index context}, and~$`G$ is a context assigning a modal type to (at least) each free variable of~\t.
Both sums and bounded sums are naturally extended from modal types to contexts 
(with, for instance, $\{x:`s;y:`t\}\uplus\{x:`s',z:`t'\}=\{x:`s\uplus`s';y:`t;z:`t'\}$).
There might be free index variables in~$\ictx,`G,`t$ and~\K, all of them from~\fiv.
Typing judgements can be derived from the rules of Figure~\ref{fig:typ}.

Derivation rules for abstractions and applications have been informally presented in Section~\ref{sec:cbvcbn}.
The other ones are then intuitive, except the derivation rule for typing~$\fix{\t}$, that is worth an explanation:
to simplify, assume we want to type only one copy of its type (that is, $\K=\one$).
To compute the weight of~\fix{\t}, we need to know the number of times~\t\ 
will be copied during the evaluation, that is the number of nodes in 
the tree of its recursive calls. This tree is described by~\I\ (as explained 
in Section~\ref{sec:index}), since each occurrence of~$x$ in~\t\ stands for a recursive call.
It has, say, $\H=\fc{0}{\one}{\I}$ nodes. At each node~$b$ of this tree, 
the~$a^{\mathit{th}}$ occurrence of~$x$ will be replaced by the~$a^{\mathit{th}}$ 
son of~$b$, \ie\ by $b+1+\fc{b+1}{a}{\I}$.
The types have to match, and that is what the second premise expresses.
Finally, the type of~\fix{\t} is the type of the ``main'' copy of~\t, at the root of the tree (\ie, at $b=0$).
The weight counts all the recursive calls (\ie,~\H) plus the weight of each copy of~\t\ (\ie, the weight of~\t\ for each $b<\H$).

Last, the subsumption rule allows to relax the precision standards of a typing judgement.
One can also restrict the inequalities on indexes to equalities in this rule, and thereby construct only \textit{precise} typing judgements.
Observe that the set of all rules but this one is syntax directed.
Moreover the subsumption rule preserves the \PCF\ skeleton of the types, and so the type system is itself syntax directed \emph{up to} index inequalities.

%%% \subsection{A Simple, Derivable Rule for Fixpoints}
\subsection{An Abstract Machine for \PCF}
\label{sec:cek}

The call-by-value evaluation of \PCF\ terms can be faithfully captured
by an abstract machine in the style of \CEK~\cite{FelleisenF87}, which will be introduced in this section.

%%%%%%%%%%%%%%%%%%%%%%%%%%%%%%%%%%%%%%%%%%%%%%%%%%%%%%%%%%%%
\begin{figure*}                                            %
  \centering
  \condinc{
    \fbox{
      \begin{minipage}{.97\textwidth}
        \input{fig-cekv}
      \end{minipage}}
  }{
    \input{fig-cekv}
  }    
  \caption{\cek\ evaluation rules for value closures.}
  \label{fig:cek-val}
\end{figure*}                                              %
%%%%%%%%%%%%%%%%%%%%%%%%%%%%%%%%%%%%%%%%%%%%%%%%%%%%%%%%%%%%
%
%%%%%%%%%%%%%%%%%%%%%%%%%%%%%%%%%%%%%%%%%%%%%%%%%%%%%%%%%%%%
\begin{figure*}                                            %
  \centering
  \condinc{
    \fbox{
      \begin{minipage}{.97\textwidth}
        \input{fig-cek}
      \end{minipage}}
  }{
    \input{fig-cek}
  }
  \caption{\cek\ contextual evaluation rules.}
  \label{fig:cek-clo}
\end{figure*}                                              %
%%%%%%%%%%%%%%%%%%%%%%%%%%%%%%%%%%%%%%%%%%%%%%%%%%%%%%%%%%%%
The internal state of the \cek\ machine consists of a closure and a stack,
interacting following a set of rules. Formally, a \emph{value closure}
is a pair $\clov=\clo{\v}$ where $\v$ is a value and $\env$ is
an \emph{environment}, itself a list of assignments of value closures to
variables:
$$
\env::=∅~|~\envelt{\clov}\cdot\env.
$$
A \emph{closure} is a pair $\cloc=\clo{\t}$ where $\t$ is
a term (and not necessarily a value). \emph{Stacks} are terms
from the following grammar:
\begin{align*}
\stack::=&\emptystack ~|~ \stackfun{\clov}
    ~|~ \stackarg{\cloc}\\
    & ~|~ \stackif{\t}{\u}{\env} ~|~ \stacks ~|~ \stackp.
\end{align*}
A \emph{process} $\process$ is a pair $\proc{\cloc}{\stack}$
of a closure and a stack.

Processes evolve according to a number of rules. Some of
them (see Figure~\ref{fig:cek-val}) describe how the \cek\ machine
evolves when the first component of the process is a value
closure.
Other rules (see Figure~\ref{fig:cek-clo}) prescribe the evolution of \cek\ in all the other cases.

The following tells us that \cek\ is an adequate methodology to evaluate
\PCF\ terms:
\begin{proposition}[Adequacy]\label{prop:eq}
If $\t$ is a \PCF\ term of type $\NatPCF$, then 
$\t\tov^*\nb$ iff $(\proc{\clo[\emptyset]{\t}}{\emptystack})\tocek^*(\proc{\clo[\emptyset]{\nb}}{\emptystack})$.
\end{proposition}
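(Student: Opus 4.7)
The plan is to establish a tight operational correspondence between call-by-value reduction on \PCF\ terms and transitions of the \CEK\ machine, via a \emph{readback} (or decoding) function from processes to \PCF\ terms. Concretely, I would define $\erase{\cdot}$ mutually on environments, value closures, closures, and stacks: an environment $\env$ decodes to a substitution $\erase{\env}$; a closure $\clo[\env]{\t}$ decodes to the \PCF\ term $\t\erase{\env}$; a stack $\stack$ decodes to a call-by-value evaluation context $\erase{\stack}[\cdot]$ (with $\emptystack$ mapped to the hole, $\stackarg{\cloc}$ to $[\cdot]\,\erase{\cloc}$, $\stackfun{\clov}$ to $\erase{\clov}\,[\cdot]$, $\stacks$ to $\suc([\cdot])$, etc.); and a process $\proc{\cloc}{\stack}$ to the term $\erase{\stack}[\erase{\cloc}]$. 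It is immediate from this definition that $\erase{\proc{\clo[\emptyset]{\t}}{\emptystack}} = \t$ and $\erase{\proc{\clo[\emptyset]{\nb}}{\emptystack}} = \nb$.

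The heart of the argument is a \emph{classification lemma} for \CEK\ transitions. Each transition $\process \tocek \process'$ falls into one of two categories. The contextual rules of Figure~\ref{fig:cek-clo} together with the two rules that merely swap a value closure onto the function slot or destructure it under $\stacks/\stackp$ are \emph{administrative}: they satisfy $\erase{\process} = \erase{\process'}$, because they only refocus the evaluation context around the next redex without performing any $\tov$ step. The remaining rules of Figure~\ref{fig:cek-val} (the $\beta$-step binding $x$, the $\fix$ unfolding, the $\ifz$ branching on $\nb[0]/\nb[n+1]$, and the arithmetic rules on $\suc/\pred$) are \emph{computational}: in each case, a direct inspection shows $\erase{\process} \tov \erase{\process'}$ using exactly one of the rewrite rules of Figure~\ref{fig:cbv-red}, closed under the applicative context $\erase{\stack}[\cdot]$.

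From the classification lemma the backward direction ($\Leftarrow$) is immediate: if $\proc{\clo[\emptyset]{\t}}{\emptystack} \tocek^* \proc{\clo[\emptyset]{\nb}}{\emptystack}$, then the sequence of readbacks forms a chain of $\tov$-steps (possibly with repetitions from administrative steps), collapsing to $\t \tov^* \nb$. For the forward direction ($\Rightarrow$) I would first establish a \emph{progress lemma}: for every process $\process$ other than a terminal process $\proc{\clo[\emptyset]{\nb}}{\emptystack}$, if $\erase{\process}$ is not stuck, then $\process \tocek \process'$ for some $\process'$; this is a routine case analysis on the shape of $\cloc$ and $\stack$, using the fact that the PCF type $\NatPCF$ rules out ill-typed configurations such as $\suc$ applied to an abstraction. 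Second, I would prove a \emph{bounded stuttering} lemma: starting from any process, only finitely many administrative steps can fire before either a computational step occurs or a terminal configuration is reached (this follows by assigning a simple decreasing measure, \eg the size of the closure component plus the pending scan depth). Combining progress, bounded stuttering, and the classification lemma, a standard simulation argument then shows that any reduction $\t \tov^n \nb$ is matched by a finite \CEK\ run from $\proc{\clo[\emptyset]{\t}}{\emptystack}$ to $\proc{\clo[\emptyset]{\nb}}{\emptystack}$, using determinism of both reductions (both \CEK\ and weak \cbv\ reduction on programs of type $\NatPCF$ are deterministic) to ensure the machine does not end up in a different terminal state.

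The main obstacle is ensuring the readback commutes correctly with substitution in the $\beta$ and $\fix$ rules: one must check that $\erase{\clo[\envelt{\clov}\cdot\env]{\t}} = (\t\erase{\env})\tsubst{\erase{\clov}}$ up to $\alpha$-equivalence, and similarly that unfolding $\fix{\t}$ by extending the environment with a self-reference yields exactly $\t\tsubst{\fix{\t}}$ on the readback side. This requires a standard but careful substitution lemma on closures, together with an invariant that environments are closed (\ie\ only contain value closures whose free variables are bound further out in the environment), an invariant that is preserved by every \CEK\ transition and holds trivially at the initial state $\proc{\clo[\emptyset]{\t}}{\emptystack}$ when $\t$ is a program.
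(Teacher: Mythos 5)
The paper does not actually prove Proposition~\ref{prop:eq}: it is stated without proof, as a standard adequacy result for Felleisen--Friedman style machines, so there is no in-paper argument to compare yours against. Your proposal is the classical and correct route --- a readback function on environments, closures, stacks and processes, a classification of transitions into administrative (readback-preserving) and computational (tracking exactly one $\tov$ step), plus progress, bounded stuttering and determinism for the forward direction --- and it would discharge the proposition. Two small points to tidy up. First, your classification sentence lists the $\stacks/\stackp$ rules of Figure~\ref{fig:cek-val} as administrative and then, in the next sentence, as computational; the latter is right, since $\suc(\nb)\tov\nb[n+1]$ and $\pred(\nb[n+1])\tov\nb$ are genuine reduction rules of Figure~\ref{fig:cbv-red} (only the $\mathsf{arg}$-to-$\mathsf{fun}$ swap in that figure is administrative). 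Second, your argument delivers a terminal process whose \emph{readback} is $\nb$, i.e.\ a configuration $\proc{\clo[\env]{\nb}}{\emptystack}$, but the environment $\env$ need not be literally empty (e.g.\ after evaluating $(`lx.\nb[5])\,\nb[3]$ the final numeral still carries the binding for $x$); since numerals are closed this is harmless, but to match the proposition as literally stated you should either read the final configuration up to erasure of the (irrelevant) environment on numerals, or note that this is how the statement is intended.
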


\paragraph{Weights and \cek\ Machine}
As it will be formalised in Section~\ref{sec:sound}, an upper bound for the evaluation of a given term 
in the \cek\ machine can be obtained by multiplying its weight and its size. This results can be explained as follows:
we have seen (in Section~\ref{sec:typ}) that its weight represents the maximal number of substitutions in 
its \cbv\ evaluation, and thereby the maximal number of steps of the form
\condinc{
\begin{align}
  \label{eq:fun-rul1}
  \proc{\clov}{\stackfun{\clo{`lx.\t}}}&\tocek   \proc{\clo[\envelt{\clov}\cdot\env]{\t}}{\stack}\\
  \label{eq:fun-rul2}
  \proc{\clov}{\stackfun{\clo{\fix{\t}}}}&\tocek
    \proc{\clo[\envelt{\clo{\fix{\t}}}\cdot\env]{\t}}{\stackarg{\clov}}
\end{align}
}{
\begin{align*}
  \proc{\clov}{\stackfun{\clo{&`lx.\t}}}\tocek \proc{\clo[\envelt{\clov}\cdot\env]{\t}}{\stack}\\
  \proc{\clov}{\stackfun{\clo{&\fix{\t}}}} \tocek\\
    &\proc{\clo[\envelt{\clo{\fix{\t}}}\cdot\env]{\t}}{\stackarg{\clov}}
\end{align*}
}
in its evaluation with the \cek.
Between two such steps, the use of the other rules is not taken into account by the weight;
however these other rules make the \emph{size} of the process to decrease.

% \begin{math}
%   \begin{array}{c@{`\tocek~}c}
%     \proc{\clov}{\stackfun{\clo{`lx.\t}}} &
%     \proc{\clo[\envelt{\clov}\cdot\env]{\t}}{\stack},
%     \text{ or}\\
%     \proc{\clov}{\stackfun{\clo{\fix{\t}}}} &
%     \proc{\clo[\envelt{\clo{\fix{\t}}}\cdot\env]{\t}}{\stackarg{\clov}}
%   \end{array}
% \end{displaymath}

%%% Local Variables: 
%%% mode: latex
%%% TeX-master: "main"
%%% End: 

%%%%%%%%%%%%%%%%%%%%
\section{Examples}\label{sec:examples}
%%%%%%%%%%%%%%%%%%%%
In this section we will see how to type some ``real life'' functions in \dlpcfv, and what is the cost associated to them.

\paragraph{Addition}
In \PCF, addition can be computed as follows:
\begin{displaymath}
  \termadd~=~\fix[f]{`lyz.\ifz{y}{z}{\suc(f\,\pred(y)\,z)}}~,
\end{displaymath}
and has \PCF\ type
\begin{math}
  \NatPCF⇒\NatPCF⇒\NatPCF.
\end{math}
A brief analysis of its evaluation, if we apply it to two values~\v\
and~\w\ in~\NatPCF, indicates that a correct annotation for this type
in \dlpcfv\ would be
\begin{displaymath}
  \mtyp{\one}{
    \left(\freccia{\NatU{\f}}{
        \mtyp[c]{\one}{\left(
            \freccia{\NatU{\g}}{\NatU{\f+\g}}\right)}}
    \right)}
\end{displaymath}
where~\f\ and~\g\ are constant symbols representing the values of~\t\ and~\u\ respectively.
Since we directly apply~\termadd, without copying this function, the index variables~$a$ and~$c$ are bounded with~\one.
This type is indeed derivable for~\termadd\ in \dlpcfv, assuming that the equational program~\ep\ is powerful enough to assign the following meaning to the corresponding index (they all depend on a free index variable~$b$):
\begin{align*}
\I~&=~\mbox{if $b<\f$ then $\one$ else $\zero$};\\
\J~&=~\f-b-1;\\
\H~&=~\f-b;\\
\K~&=~\f-b+1.
\end{align*}
The derivation is given in Figure~\ref{fig:typ-add}.
We omit all the subsumption steps, but the index equalities they use are easy to check given that the number of nodes in the tree of recursive calls is $\fc[b]{\zero}{\one}{\I}=\f+\one$.
The final weight is equal to~$3`*(\f+\one)$.
%%%%%%%%%%%%%%%%%%%%%%%%%%%%%%%%%%%%%%%%%%%%%%%%%%%%%%%%%%%%
\begin{figure*}                                            %
  \centering
  \condinc{\footnotesize
    \fbox{
      \begin{minipage}{.97\linewidth}
        \input{fig-exadd}
      \end{minipage}}
  }{
    \input{fig-exadd}
    \vspace{10pt}
  }
  \caption{Typing derivation of~\termadd }
  \label{fig:typ-add}
\end{figure*}                                              %
%%%%%%%%%%%%%%%%%%%%%%%%%%%%%%%%%%%%%%%%%%%%%%%%%%%%%%%%%%%%

\paragraph{Multiplication}
The multiplication can be easily defined using the addition:
\begin{displaymath}
  \termmult~=~\fix{`lyz.\ifz{y}{\nb[0]}{\termadd~z~(x\,\pred(y)\,z)}}.
\end{displaymath}
Taking the indexes~\I,\J,\H\ and~\K\ defined as in the previous paragraph, 
and using the typing judgement for~\termadd\ with~\f\ replaced by~\g\ and \g\ 
replaced by~$\J`*\g$, we can assign to~\termmult\ the type
\begin{displaymath}
  \mtyp{\one}{\freccia{\NatU{\f}}{
      \mtyp[c]{\one}{(\freccia{\NatU{\g}}{
          \NatU{\f`*\g}})}}}
\end{displaymath}
(see Figure~\ref{fig:typ-mul}).
The weight of~\termmult\ is equal to $3`*(\f+\one)+\sum_{b<\f+1}\M$, where the meaning of~\M\ is ``if $b=\f$ then~$0$ else $3\g+1$''.
Thus the execution of the application of~\termmult\ to two integers~\nb\ and~\nb[m] in the \cek\ machine is proportional to~$n`*m$.

%%%%%%%%%%%%%%%%%%%%%%%%%%%%%%%%%%%%%%%%%%%%%%%%%%%%%%%%%%%%
\begin{figure*}                                            %
  \centering
  \condinc{\footnotesize
    \fbox{
      \begin{minipage}{.97\linewidth}
        \input{fig-exmult}
      \end{minipage}}
  }{
  \vspace{20pt}
    \input{fig-exmult}
    \vspace{10pt}
  }
  \caption{Typing derivation of~\termmult }
  \label{fig:typ-mul}
\end{figure*}                                              %
%%%%%%%%%%%%%%%%%%%%%%%%%%%%%%%%%%%%%%%%%%%%%%%%%%%%%%%%%%%%

%%% Local Variables: 
%%% mode: latex
%%% TeX-master: "main"
%%% End: 

%%%%%%%%%%%%%%%%%%%%%%%%%%%%%%%%%%%
\section{The Metatheory of \dlpcfv}
\label{sec:meta}
%%%%%%%%%%%%%%%%%%%%%%%%%%%%%%%%%%%
In this section, some metatheoretical results about
\dlpcfv\ will be presented.
More specifically, type derivations are shown to be modifiable in many different ways, all of them leaving the underlying term unaltered. These manipulations, described in Section~\ref{sec:manipul}, form a basic toolkit which is essential to achieve the main
results of this paper, namely intentional soundness and
completeness (which are presented in Section~\ref{sec:sound} and
Section~\ref{sec:complete}).
Types are preserved by call-by-value reduction, as proved in Section~\ref{sec:subjred}.

%%%%%%%%%%%%%%%%%%%%%%%%%%%%%%%%%%%%%%%%%%%%%%%%%%%%%%%%%%%%
\begin{figure*}                                            %
  \centering
  \condinc{\footnotesize
    \fbox{
      \begin{minipage}{.97\linewidth}
        \input{fig-stacktype}
      \end{minipage}}
  }{
    \input{fig-stacktype}
    \vspace{10pt}
  }
  \caption{\dlpcfv: Lifting Typing to Stacks}
  \label{fig:acc-stack}
\end{figure*}                                              %
%%%%%%%%%%%%%%%%%%%%%%%%%%%%%%%%%%%%%%%%%%%%%%%%%%%%%%%%%%%%

\subsection{Manipulating Type Derivations}
\label{sec:manipul}
First of all, the constraints $\ictx$ in index, subtyping 
and typing judgements can be made stronger without altering the rest:
\begin{lemma}[Strengthening]
  \label{lem:strength}
  If $\ijudg{\fiv}{\ictxvar}{\ictx}$, then the following implications hold:
  \begin{varenumerate}
    \item\label{point:strfirst}
      If $\ijudg{\fiv}{\ictx}{\I\leq\J}$, then $\ijudg{\fiv}{\ictxvar}{\I\leq\J}$;
    \item\label{point:strsecond}
      If $\sjudg{\fiv}{\ictx}{`s`<`t}$, then $\sjudg{\fiv}{\ictxvar}{`s`<`t}$;
    \item\label{point:strthird}
      If $\judg{\fiv}{\ictx}{`G}{\I}{\t}{`s}$, then $\judg{\fiv}{\ictxvar}{`G}{\I}{\t}{`s}$.
  \end{varenumerate}
\end{lemma}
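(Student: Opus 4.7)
The plan is to prove the three points in the order stated, since each relies on the previous one. The essential observation is that $\vDash$ denotes \emph{semantic} entailment (as made explicit just before the lemma): $\ijudg{\fiv}{\ictx}{\I\leq\J}$ holds iff every assignment $\rho$ of the variables in $\fiv$ that satisfies every constraint of $\ictx$ also makes both $\itp{\I}$ and $\itp{\J}$ defined with $\itp{\I}\leq\itp{\J}$. The hypothesis $\ijudg{\fiv}{\ictxvar}{\ictx}$ is then to be read as ``$\ictxvar$ semantically entails each constraint of $\ictx$''.

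For point~\ref{point:strfirst}, the argument is a one-line application of the transitivity of semantic entailment: any $\rho$ satisfying $\ictxvar$ satisfies $\ictx$ by hypothesis, hence satisfies $\I\leq\J$ by the premise.

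For point~\ref{point:strsecond}, I would proceed by induction on the subtyping derivation. Every rule of Figure~\ref{fig:tsub} has premises that are either index judgements (to which point~\ref{point:strfirst} applies directly) or subtyping judgements (handled by the induction hypothesis). The only mildly delicate rule is the one for modal subtyping, which introduces $a<\J$ in the inner context; the induction still goes through because whenever $\ictxvar$ entails $\ictx$, the extended context $(a<\J,\ictxvar)$ still entails $(a<\J,\ictx)$.

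For point~\ref{point:strthird}, I would again proceed by induction on the typing derivation. Each rule of Figure~\ref{fig:typ} has premises that are typing, subtyping or index judgements, to which the induction hypothesis, point~\ref{point:strsecond} or point~\ref{point:strfirst} applies respectively. The rules $(\multimap)$, $(\textit{If})$ and $(\textit{Fix})$ extend the index context with additional constraints (such as $a<\I$, $\J\leq\zero$, $\K\geq\one$, or $b<\H$); as in the subtyping case, the strengthening of the outer context propagates inside such extensions, so the induction hypothesis applies at each premise. The $(\textit{Subs})$ rule is likewise handled by combining the induction hypothesis with points~\ref{point:strfirst} and~\ref{point:strsecond}. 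The main obstacle, such as it is, is purely notational: keeping careful track of the fresh index variables and the extra constraints added by these rules. Because entailment is semantic and trivially stable under context extension, no genuine mathematical difficulty arises, and the result follows by a routine simultaneous induction.
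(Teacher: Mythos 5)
Your proposal is correct and follows essentially the same route as the paper: point~\ref{point:strfirst} by transitivity of semantic entailment, and points~\ref{point:strsecond} and~\ref{point:strthird} by induction on the subtyping and typing derivations respectively, using the earlier points at each index or subtyping premise. Your extra remark that entailment is stable under extending both contexts with the same constraint (needed for the rules that introduce $a<\I$, $b<\H$, etc.) is a detail the paper leaves implicit but which you correctly identify as the only point requiring care.
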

\begin{proof}
  Point~\ref{point:strfirst}. is a trivial consequence of transitivity of implication in logic.
  Point~\ref{point:strsecond}. can be proved by induction on the structure of the proof of $\sjudg{\fiv}{\ictx}{`s`<`t}$, using point~\ref{point:strfirst}.
  Point~\ref{point:strthird}. can be proved by induction on a proof of $\judg{\fiv}{\ictx}{`G}{\I}{\t}{`s}$, using points~\ref{point:strfirst} and \ref{point:strthird}. 
\end{proof}
Strengthening is quite intuitive:
whatever appears on the right of $\vdash_\ep$ should hold for all values of the variables in $\fiv$ satisfying $\ictx$, so strengthening corresponds to making the judgement weaker. 

Fresh term variables can be added to the context $`G$, leaving the rest of the judgement unchanged:
\begin{lemma}[Context Weakening]\label{lem:ctx-weak}
  $\judg{\fiv}{\ictx}{`G}{\I}{\t}{`t}$ implies
  $\judg{\fiv}{\ictx}{`G,`D}{\I}{\t}{`t}$.
\end{lemma}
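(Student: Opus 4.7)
The plan is to proceed by induction on the derivation of $\judg{\fiv}{\ictx}{`G}{\I}{\t}{`t}$, where by $\alpha$-renaming the additional context~$`D$ can be assumed to declare variables that are fresh with respect to $`G$, to $\t$, and to all bound index variables appearing in the derivation.

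For the base cases the reasoning is immediate: the axiom rule $(\textit{Ax})$ and the constant rule $(n)$ admit an arbitrary ambient context in their conclusion, so re-applying the rule with $`G, `D$ in place of $`G$ is enough. The successor and predecessor rules transfer the context unchanged from their single premise, so one direct use of the induction hypothesis works. The subsumption rule $(\textit{Subs})$ is handled by applying the induction hypothesis to its premise and re-applying subsumption, extending the context subtyping $`D' `< `G$ by the identity on~$`D$ and keeping the weight and type inequalities untouched.

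The interesting cases are those whose conclusion context is built via $\uplus$ or via a bounded sum $\sum_{a<\I}(-)$. For the rules $(\textit{App})$ and $(\textit{If})$, which combine two contexts by $\uplus$, I apply the induction hypothesis to one of the premises to incorporate~$`D$; the freshness of the variables of~$`D$ ensures that $(`G, `D) \uplus `G' = (`G \uplus `G'), `D$, so re-applying the rule gives the desired conclusion. For the rules $(\multimap)$ and $(\textit{Fix})$, whose conclusions carry $\sum_{a<\I}`G$ (respectively $\sum_{b<\H}`G$), one applies the induction hypothesis under the binder to add~$`D$ to the premise's context; since the bound index variable is not free in~$`D$, the bounded sum over~$`D$ can be reconciled with~$`D$ itself, and re-applying the original rule (possibly followed by one use of $(\textit{Subs})$) recovers the expected conclusion with $\sum_{a<\I}`G, `D$ as context.

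The main obstacle is precisely this last step: the operation $\sum_{a<\I}(-)$ is not literally a homomorphism for context union when acting on fresh entries, so extracting~$`D$ back from inside the sum is only possible through a careful extension of the bounded-sum definition to constant (index-variable-free) entries, or equivalently through a controlled invocation of subsumption that widens the fresh entries until they match~$`D$. Once this bookkeeping is settled the induction closes along the same lines as the analogous lemma for~\dlpcfn.
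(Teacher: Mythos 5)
Your proof is correct and follows essentially the same route as the paper, whose entire argument is that the lemma holds by induction on the structure of the derivation of $\judg{\fiv}{\ictx}{`G}{\I}{\t}{`t}$. The one delicate point you flag --- reconciling the fresh entries of~$`D$ with the bounded sums in the conclusions of $(\multimap)$ and $(\textit{Fix})$ --- is indeed settled by the second option you mention: weaken the premise with entries of multiplicity bound~$\zero$, whose bounded sum is again an entry of bound~$\zero$, and then recover~$`D$ by one application of $(\textit{Subs})$, using $\sjudg{\fiv}{\ictx}{\mtyp[c]{\M}{\A}`<\mtyp[c]{\zero}{\A}}$.
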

\begin{proof}
Again, this is an induction on the structure of a derivation for 
$\judg{\fiv}{\ictx}{`G}{\I}{\t}{`t}$.
\end{proof}
Another useful transformation on type derivations consists in substituting index variables for defined index terms.
\begin{lemma}[Index Substitution]
  \label{lem:isubst}
  If $\ijudg{\fiv}{\ictx}{\I\cvg}$, then the following
  implications hold:
  \begin{varenumerate}
    \item \label{it:isubs-ijudg}
      If $\ijudg{(a,\fiv)}{\ictx,\ictxvar}{\J\leq\K}$, then
      \condinc{}{

      \hfil} 
    $\ijudg{\fiv}{\ictx,\ictxvar\isubst{\I}}{\J\isubst{\I}\leq\K\isubst{\I}}$~;
    \item \label{it:isubs-sjudg}
      If $\sjudg{(a,\fiv)}{\ictx,\ictxvar}{`s`<`t}$, then       \condinc{}{

      \hfil} $\sjudg{\fiv}{\ictx,\ictxvar\isubst{\I}}{`s\isubst{\I}`<`t\isubst{\I}}$~;
    \item \label{it:isubs-judg}
      If $\judg{(a,\fiv)}{\ictx,\ictxvar}{`G}{\J}{\t}{`s}$, then       \condinc{}{

      \hfil} $\judg{\fiv}{\ictx,\ictxvar\isubst{\I}}{`G\isubst{\I}}{\J\isubst{\I}}{\t}{`s\isubst{\I}}$~.
  \end{varenumerate}
\end{lemma}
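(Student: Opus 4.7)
My plan is to prove the three points in order, since each one uses the previous, exactly as the strengthening lemma was handled.

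For point~\ref{it:isubs-ijudg} I would argue purely semantically. Fix an equational program~$\ep$ and a valuation $`r:\fiv\to\int$ satisfying $\ictx,\ictxvar\isubst{\I}$. The hypothesis $\ijudg{\fiv}{\ictx}{\I\cvg}$ gives a defined value $n=\itp{\I}$, so the extended valuation $`r' = `r[a\mapsto n]$ is defined on $(a,\fiv)$. A routine compositionality check shows $\itp[`r']{\J} = \itp[`r]{\J\isubst{\I}}$ for every index term~$\J$ (and likewise undefined together), which in particular means $`r'$ satisfies $\ictx,\ictxvar$. Applying the assumption $\ijudg{(a,\fiv)}{\ictx,\ictxvar}{\J\leq\K}$ at $`r'$ and translating back through this identity gives $\itp[`r]{\J\isubst{\I}}\leq\itp[`r]{\K\isubst{\I}}$, both defined, which is what we wanted.

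For point~\ref{it:isubs-sjudg}, I would induct on the derivation of $\sjudg{(a,\fiv)}{\ictx,\ictxvar}{`s`<`t}$ using the three rules of Figure~\ref{fig:tsub}. The $\mathtt{Nat}$ case and the premises on indices in the $\mtyp{\I}{\A}$ case are discharged by point~\ref{it:isubs-ijudg}; the arrow and modal cases reduce to the induction hypothesis, taking care to $`a$-rename the bound index variable (say $b$ of the modal rule) so it is fresh for both $\I$ and $\mathit{FV}(\ictx,\ictxvar)$, so that substitution commutes cleanly with the rule and $\ictxvar$ is unaffected.

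For point~\ref{it:isubs-judg}, induct on the typing derivation in Figure~\ref{fig:typ}. Most cases are immediate: constants and axioms carry no indices to substitute; successor, predecessor and application have nothing to bind; the \textit{Subs} rule uses points~\ref{it:isubs-ijudg} and~\ref{it:isubs-sjudg}. The interesting cases are those that introduce or sum over an index variable, namely $(\multimap)$, $(\textit{If})$, and $(\textit{Fix})$. In each, I would first $`a$-rename the locally bound index variable to be fresh for $a$, $\I$ and $\mathit{FV}(\ictx,\ictxvar)$, so that the substitution $\isubst{\I}$ commutes with $\sum_{b<\J}$, with $\uplus$, and with the nested substitutions appearing in the \textit{Fix} premise. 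Once the substitution has been pushed inside, the induction hypothesis applies to the premises (the context $\ictx,\ictxvar$ of the premise becomes $(a<\J, \ictx, \ictxvar)$ in the $(\multimap)$ case, but we still have $\ijudg{(a<\J,\fiv)}{\ictx}{\I\cvg}$ by Lemma~\ref{lem:strength}, so we can recurse), and reassembling the rule on the substituted premises yields the substituted conclusion.

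The main obstacle is bookkeeping in the \textit{Fix} case: two substitutions ($\isubst{\zero}$ and $\isubst[b]{\cdot}$ involving forest cardinalities $\fc{b+1}{a}{\I}$) are already present in the conclusion and in the side condition, and one must verify that commuting the outer $\isubst{\I}$ with them and with $\sum_{b<\fc{0}{\K}{\I}}$ produces exactly the substituted premises. After the $`a$-renaming ensuring $b$ is fresh for $\I$, this reduces to the standard substitution lemma for index terms (a straightforward induction on index term structure, including the $\sum$ and $\fc{\,}{\,}{\,}{}$ constructors), which I would state once as an auxiliary fact and reuse throughout.
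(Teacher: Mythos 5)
Your proposal is correct and follows essentially the same route as the paper: a direct semantic argument for point~1 (extending the valuation with $a\mapsto\itp{\I}$ and using compositionality of the interpretation under substitution), then inductions on the subtyping and typing derivations for points~2 and~3 using the earlier points. The paper states the latter two inductions in one line each, so your extra bookkeeping about $\alpha$-renaming bound index variables and commuting substitutions in the $(\textit{Fix})$ case is just a more explicit rendering of the same proof.
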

\begin{proof}
  \begin{varenumerate}
  \item Assume that \ijudg{\fiv}{\ictx}{\I\cvg} and \ijudg{(a,\fiv)}{\ictx,\ictxvar}{\J\leq\K}, and let~$`r$ be an assignment satisfying $\ictx,\ictxvar\isubst{\I}$.
    In particular, $`r$ satisfies~\ictx, thus~\itp{\I} is defined, say equal to~$n$.
    For any index~\H, $\itp{\H\isubst{\I}}=\itp[`r,a\mapsto n]{\H}$.
    Hence $(`r,a\mapsto n)$ satisfies $\ictx,\ictxvar$, and then it also satisfies $\J\leq\K$.
    So $\itp{\J\isubst{\I}}=\itp[`r,a\mapsto n]{\J}\leq\itp[`r,a\mapsto n]{\K}=\itp{\K\isubst{\I}}$, and~$`r$ satisfies $\J\isubst{\I}\leq\K\isubst{\I}$.
    Thus $\ijudg{\fiv}{\ictx,\ictxvar\isubst{\I}}{\J\isubst{\I}\leq\K\isubst{\I}}$.
  \item By induction on the subtyping derivation, using~\ref{it:isubs-ijudg}.
  \item By induction on the typing derivation, using~\ref{it:isubs-ijudg} and \ref{it:isubs-sjudg}.
    \vspace{-1.5em}
  \end{varenumerate}
\end{proof}
Observe that the only hypothesis is that $\ijudg{\fiv}{\ictx}{\I\cvg}$ (definition in Section~\ref{sec:index}): 
we do not require $\I$ to be a value of $a$ that satisfies $\ictxvar$.
If it does not the constraints in $\ictx,\ictxvar\isubst{\I}$ become inconsistent, and the obtained judgements are vacuous.

\subsection{Subject Reduction}
\label{sec:subjred}
What we want to prove in this subsection is the following result:
\begin{proposition}[Subject Reduction]
  \label{prop:sr}
  If $\t\tov\u$ and $\judg{\fiv}{\ictx}{`0}{\M}{\t}{`t}$,
  then $\judg{\fiv}{\ictx}{`0}{\M}{\u}{`t}$.
\end{proposition}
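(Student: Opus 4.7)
The plan is to proceed by induction on the derivation of $\t \tov \u$, with the base cases corresponding to the seven reduction axioms and the inductive step corresponding to reduction inside an applicative context. Each inductive case reduces, by inversion on the typing rule for the surrounding constructor (App, If, s, p) and application of the induction hypothesis to the reduced subterm, to a recombination of the resulting derivation; the \textit{Subs} rule together with Strengthening (Lemma~\ref{lem:strength}) lets us absorb any slack between the index terms of the original and reduced derivations.

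The numeric base cases are routine: for $\suc(\nb) \tov \nb[n+1]$ and the two $\pred$ cases, one inverts the typing of $\t$ to recover a judgement $\nb:\Nat{\inb}{\inb}$ (up to subsumption), then re-applies rule $(n)$ after the arithmetic adjustment and uses \textit{Subs} to match the previously derived $\Nat{\I}{\J}$ type. For the two $\ifz$ cases, inverting rule \textit{(If)} gives one typable branch and the other branch typable only under the constraint ($\J\leq\zero$ or $\K\geq\one$) that is in fact satisfied; Strengthening then yields the branch's typing under $\ictx$ alone, and the weight $\M+\N$ subsumes $\N$ by \textit{Subs}.

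The main technical ingredient is a value-substitution lemma to handle the $\beta$ case $(`lx.\t)\v \tov \t\tsubst{\v}$. Specifically: if $\judg{(a,\fiv)}{(a<\I,\ictx)}{`G,x:`s}{\K}{\t}{`t}$ and $\v$ is a value admitting, for each $a<\I$, a derivation $\judg{\fiv}{(a<\I,\ictx)}{`D}{\zero}{\v}{`s}$ (values carry weight zero in every such derivation), then
\begin{displaymath}
\judg{\fiv}{\ictx}{\sum_{a<\I}`G \uplus \sum_{a<\I}`D}{\sum_{a<\I}\K}{\t\tsubst{\v}}{`t\isubst{\zero}}.
\end{displaymath}
This lemma is proved by induction on the derivation of $\t$, using Index Substitution (Lemma~\ref{lem:isubst}) to instantiate $a$ and propagate into subderivations, and Context Weakening (Lemma~\ref{lem:ctx-weak}) to align contexts. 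Applied to the $\beta$ case with $\I=\one$, and combined with the $\one$ coming from the $(\multimap)$ rule absorbing the $\H$ from \textit{(App)} via subsumption, it yields subject reduction for $\beta$.

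The hard part, as expected, is the fixpoint unfolding $(\fix{\t})\v \tov (\t\tsubst{\fix{\t}})\v$, because the typing rule \textit{(Fix)} packages the cost of all future recursive calls into the forest cardinality $\H=\fc{0}{\K}{\I}$, and unfolding the fixpoint exposes the root of that forest while leaving the remaining subforests attached to the residual $\fix{\t}$ inside $\t$. The plan is: invert \textit{(Fix)} and \textit{(App)} on the redex to obtain a derivation of $\t$ with $x:\mtyp{\I}{\A}$, then re-type the \emph{inner} $\fix{\t}$ using \textit{(Fix)} but with the recursion index shifted from $b=0$ to $b+1+\fc{b+1}{a}{\I}$ (the child nodes of the root), so that the subtyping premise $\B\isubst{\zero}\isubst[b]{\fc{b+1}{a}{\I}+b+1}`<\A$ already provided by the original derivation exactly matches what is needed to substitute the inner $\fix{\t}$ for $x$ via the value-substitution lemma. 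Finally, the weight bookkeeping matches because the recursive identity for forest cardinalities $\fc{0}{\K}{\I} = \J\isubst{\zero} + 1 + \fc{0}{\K-1}{\I\isubst{\zero+1+\cdots}}$ (together with $\ijudg{\fiv}{\ictx}{\K\geq\one}$, which follows because the redex is applied) decomposes the outer weight $\H+\sum_{b<\H}\J$ into the contribution of the root plus the remaining forests, giving exactly the weight of $(\t\tsubst{\fix{\t}})\v$ up to \textit{Subs}.
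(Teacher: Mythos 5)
Your overall skeleton --- induction on the reduction step, with the real work delegated to a value-substitution lemma covering both the $\beta$-case and fixpoint unfolding --- is exactly the paper's strategy, and your sketch of the fixpoint case (re-instantiating $(\mathit{Fix})$ at the subforest rooted at the children of node $0$, so that the premise $\sjudg{(a,b,\fiv)}{(a<\I,b<\H,\ictx)}{\B\isubst{\zero}\isubst[b]{\fc{b+1}{a}{\I}+b+1}`<\A}$ supplies precisely the type needed for the residual $\fix{\t}$) is in the right spirit. But your substitution lemma is wrong as stated, for a reason that is central to how \dlpcfv\ works: values do \emph{not} carry weight zero. Rule $(\multimap)$ assigns $`lx.\t$ the weight $\I+\sum_{a<\I}\K$ and $(\mathit{Fix})$ assigns $\fix{\t}$ the weight $\H+\sum_{b<\H}\J$; the weight of a value prepays the cost of all its future uses and vanishes only for numerals or for values typed with $\zero$ copies (indeed, $\termadd$ is a value of weight $3\times(\f+\one)$ in Figure~\ref{fig:typ-add}). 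So your hypothesis $\judg{\fiv}{(a<\I,\ictx)}{`D}{\zero}{\v}{`s}$ is unsatisfiable in general, and your conclusion's weight $\sum_{a<\I}\K$ omits the value's own contribution. The paper's Substitution Lemma instead assumes $\judg{\fiv}{\ictx}{`0}{\N}{\v}{`s}$ with arbitrary $\N$ and produces a weight $\K\leq\M+\N$, which is what makes the bookkeeping close (the slack being absorbed by $(\mathit{Subs})$, as you say). Relatedly, your conclusion type $`t\isubst{\zero}$ hard-wires the $\beta$-scenario into the lemma; the general statement should return $`t$ itself.

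The second, larger gap is that the induction inside the substitution lemma cannot proceed with only Index Substitution and Context Weakening, which is all you invoke. The context $`G,x:`s$ is decomposed at every instance of $(\mathit{App})$ and $(\mathit{If})$ as a binary sum $\uplus$, and at every instance of $(\multimap)$ and $(\mathit{Fix})$ as a bounded sum $\sum_{a<\I}$; hence the derivation you hold for $\v$ at type $`s=`s_1\uplus`s_2$ (resp.\ at a bounded sum $\sum_{c<\J}`s'$) must be \emph{split} into derivations at $`s_1$ and at $`s_2$ (resp.\ into a single derivation parametric in $c$) before the induction hypothesis can be applied to the premises. Establishing that such splittings exist is the technically heavy part of the paper's proof (Lemmas~\ref{lem:cut-sum} and~\ref{lem:cut-boundedsum}): it requires Strengthening and Index Substitution applied to the premises of $(\multimap)$ and $(\mathit{Fix})$, together with nontrivial forest-cardinality identities such as $\fc{\zero}{\K_1}{\I}+\fc{\H_1}{\K_2}{\I}=\fc{\zero}{\K_1+\K_2}{\I}$ where $\H_1=\fc{\zero}{\K_1}{\I}$. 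Without these splitting lemmas (or an equivalent), the inductive step of your substitution lemma fails at every application, conditional, abstraction and fixpoint node.
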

Subject Reduction can be proved in a standard way, by going through
a Substitution Lemma, which only needs to be proved when the term 
being substituted is a \emph{value}. Preliminary to the Substitution Lemma
are two auxiliary results stating that derivations giving types to
values can, if certain conditions hold, be split into two,
or put in parametric form:
\begin{lemma}[Splitting]
  \label{lem:cut-sum}
  If $\judg{\fiv}{\ictx}{`G}{\M}{\v}{`t_1\uplus`t_2}$,
  then there exist two indexes $\N_1,\N_2$, and two contexts $`G_1,`G_2$, such that
  $\judg{\fiv}{\ictx}{`G_i}{\N_i}{\v}{`t_i}$, and
  $\ijudg{\fiv}{\ictx}{\N_1+\N_2\leq\M}$ and
  $\sjudg{\fiv}{\ictx}{`G`<`G_1\uplus`G_2}$.
\end{lemma}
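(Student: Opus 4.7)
The plan is structural induction on the typing derivation of $\v$. Since $\v$ is a value---namely $\nb$, $`lx.\t$, or $\fix{\t}$---and the axiom rule types only variables, the last rule must be one of $(n)$, $(\multimap)$, $(\textit{Fix})$, or $(\textit{Subs})$. Moreover, since $`t_1\uplus`t_2$ is defined, this type is either $\Nat{\I}{\J}$ (with $`t_1 = `t_2 = \Nat{\I}{\J}$) or of the form $\mtyp[c]{\I+\J}{\A}$ with $`t_1 = \mtyp{\I}{\A\isubst[c]{a}}$ and $`t_2 = \mtyp[b]{\J}{\A\isubst[c]{\I+b}}$. The $(n)$ case is immediate: take both output derivations to be the given one, with $\N_1 = \N_2 = \zero$ and $`G_1 = `G_2 = `G$, using the absorption $\Nat{\I}{\J}\uplus\Nat{\I}{\J} = \Nat{\I}{\J}$.

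The $(\multimap)$ case contains the core algebraic idea. The premise has the form $\judg{(a,\fiv)}{(a<\I+\J,\ictx)}{`G',x:`s}{\K}{\t}{`t}$, with $`G = \sum_{a<\I+\J}`G'$ and $\M = \I+\J+\sum_{a<\I+\J}\K$. For the first sub-derivation I Strengthen the constraint to $a<\I$ (legal since $\I\leq\I+\J$) and reapply $(\multimap)$, obtaining $`t_1$ with $\N_1 = \I+\sum_{a<\I}\K$ and $`G_1 = \sum_{a<\I}`G'$. For the second, I apply Index Substitution to replace $a$ by $\I+a'$ in the premise (with $a'$ fresh), Strengthen the resulting constraint $\I+a'<\I+\J$ to $a'<\J$, and reapply $(\multimap)$, obtaining $`t_2$ with $\N_2 = \J+\sum_{a'<\J}\K\isubst[a]{\I+a'}$ and $`G_2 = \sum_{a'<\J}`G'\isubst[a]{\I+a'}$. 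The standard splitting identity $\sum_{a<\I+\J}X = \sum_{a<\I}X + \sum_{a'<\J}X\isubst[a]{\I+a'}$ then gives both $\N_1+\N_2 = \M$ and $`G_1\uplus`G_2 = `G$, so the subtyping requirement $`G `< `G_1\uplus`G_2$ holds by reflexivity.

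In the $(\textit{Subs})$ case, inversion on the rule yields a premise derivation $\judg{\fiv}{\ictx}{`G'}{\M'}{\v}{`s'}$ with $`G `< `G'$, $`s' `< `t_1\uplus`t_2$, and $\M'\leq\M$. Inspection of the subtyping rules forces $`s' = \mtyp[c]{\L}{\D}$ with $\L\geq\I+\J$ and $\D `< \A$ under $c<\I+\J$. I then split $`s'$ as $\mtyp{\I}{\D}\uplus\mtyp[c']{\L-\I}{\D\isubst[c]{\I+c'}}$, invoke the induction hypothesis on the premise derivation, and finally apply $(\textit{Subs})$ to each of the two returned sub-derivations to pull their types up to $`t_1$ and $`t_2$ respectively, using Strengthening and Index Substitution to transport the subtyping $\D `< \A$ to both components.

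The $(\textit{Fix})$ case is the main technical obstacle. It proceeds along the same pattern but splits the outer index $\K$ of the derived modal type as $\K_1+\K_2$, partitioning the forest of recursive calls (of size $\H = \fc{0}{\K}{\I}$) into a sub-forest of $\H_1 = \fc{0}{\K_1}{\I}$ nodes and one of $\H_2 = \H-\H_1$ nodes. The first sub-derivation is obtained by Strengthening the premise constraint to $b<\H_1$ and reapplying $(\textit{Fix})$ with outer bound $\K_1$. The second requires Index Substitution shifting $b$ by $\H_1$ so that the shifted sub-forest starts at position $\H_1$ rather than $0$, together with a forest-cardinality identity of the form $\fc{0}{\K_1+a'}{\I} = \H_1 + \fc{\H_1}{a'}{\I}$ (or its appropriate reformulation using the recursive definition of $\bigotriangleup$) needed to align the resulting type with $`t_2 = \mtyp[a']{\K_2}{\B\isubst{\zero}\isubst[b]{\fc{0}{\K_1+a'}{\I}}}$. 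The delicate point is verifying that the subtyping side condition of the Fix rule---relating each occurrence of $x$ to the appropriate son in the tree of recursive calls---survives both the strengthening and the shift, which requires careful bookkeeping of the renumbering of forest nodes.
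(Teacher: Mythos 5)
Your proposal follows essentially the same route as the paper's proof: the numeral case is dispatched trivially, the abstraction case splits the bound $\I+\J$ via Strengthening for the first component and Index Substitution ($a\mapsto a+\I$) for the second, and the fixpoint case partitions the forest of recursive calls using $\H_1=\fc{0}{\K_1}{\I}$, the shift $b\mapsto b+\H_1$, and the identity $\fc{0}{a+\K_1}{\I}=\H_1+\fc{\H_1}{a}{\I}$, exactly as in the paper. The only organizational difference is that you treat the subsumption rule as an explicit induction case whereas the paper folds it into its inversion of the derivation; this is a harmless (arguably cleaner) presentation of the same argument.
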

\begin{proof}
  If~\v\ is a primitive integer~\nb, the result is trivial as the only possible decomposition of a type for integers is $\Nat{\I}{\J}=\Nat{\I}{\J}\uplus\Nat{\I}{\J}$.
  
  If $\v=`lx.\t$, then its typing judgement derives from
  \begin{align}
    \label{eq:hyp-abstr-one}
    \judg{(a,\fiv)}{(a<\I,\ictx)}{`D,x:`s&}{\K}{\t}{`t} \\
    \label{eq:hyp-abstr-two}
    \sjudg{\fiv}{\ictx&}{`G`<\sum_{a<\I}`D}
  \end{align}
  with $`t_1\uplus`t_2=\mtyp{\I}{\freccia{`s}{`t}}$ and $\M=\I+\sum_{a<\I}\K$.
  Hence $\I=\I_1+\I_2$, and $`t_1=\mtyp{\I_1}{\freccia{`s}{`t}}$, and $`t_2=\mtyp{\I_2}{\freccia{`s\isubst{\I_1+a}}{`t\isubst{\I_1+a}}}$.
  Since \ijudg{(a,\fiv)}{(a<\I_1,\ictx)}{(a<\I,\ictx)}, we can strength the hypothesis in~\eqref{eq:hyp-abstr-one} by Lemma~\ref{lem:strength} and derive
  \begin{center}
    \AxiomC{\judg{(a,\fiv)}{(a<\I_1,\ictx)}{`D,x:`s}{\K}{\t}{`t}}
    \UnaryInfC{\judg{\fiv}{\ictx}{\sum_{a<\I_1}`D}{\I_1+\sum_{a<\I_1}\K}{
        `lx.\t}{\mtyp{\I_1}{\freccia{`s}{`t}}}}
    \DisplayProof
  \end{center}
  On the other hand, we can substitute~$a$ with~$a+\I_1$ in~\eqref{eq:hyp-abstr-one} by Lemma~\ref{lem:isubst}, and derive
  \begin{center}
    \AxiomC{\judg{(a,\fiv)}{(a<\I_2,\ictx)}{`D\isubst{a+\I_1},
        x:`s\isubst{a+\I_1}}{\K\isubst{a+\I_1}}{\t}{`t\isubst{a+\I_1}}}
    \UnaryInfC{\judg{\fiv}{\ictx}{\sum_{a<\I_2}`D\isubst{a+\I_1}}{
        \I_2+\sum_{a<\I_2}\K\isubst{a+\I_1}}{`lx.\t}{
        \mtyp{\I_2}{\freccia{`s\isubst{a+\I_1}}{`t\isubst{a+\I_1}}}}}
    \DisplayProof
  \end{center}
  Hence we can conclude with $`G_1=\sum_{a<\I_1}`D$, $`G_2=\sum_{a<\I_2}`D\isubst{a+\I_1}$, $\N_1=\I_1+\sum_{a<\I_1}\K$ and $\N_2=\I_2+\sum_{a<\I_2}\K\isubst{a+\I_1}$.
  
  Now, if $\v=\fix{\t}$, then its typing judgement derives from
  \begin{align}
    \label{eq:hyp-fix1}
    \judg{(b,\fiv)}{(b<\H,\ictx)}{`D,x:\mtyp{\I}{\A}&}{\J}{\t}{\mtyp{1}{\B}}\\
    \label{eq:hyp-fix2}
    \ijudg{\fiv}{\ictx&}{\textstyle\H\geq\fc{0}{\K}{\I}} \\
    \label{eq:hyp-fix3}
    \sjudg{(a,b,\fiv)}{(a<\I,b<\H,\ictx)&}{
      \B\isubst{\zero}\isubst[b]{\textstyle\fc{b+1}{a}{\I}+b+1}`<\A}\\
    \label{eq:hyp-fix4}
    \sjudg{(a,\fiv)}{(a<\K,\ictx)&}{
      \B\isubst{\zero}\isubst[b]{\textstyle\fc{0}{a}{\I}}`<\C}\\
    \label{eq:hyp-fix5}
    \sjudg{\fiv}{\ictx&}{\textstyle`G`<\sum_{b<\H}`D} 
  \end{align}
  with $`t_1\uplus`t_2=\mtyp{\K}{\C}$, and $\M=\H+\sum_{b<\H}\J$.
  Hence $\K=K_1+\K_2$, with $`t_1=\mtyp{\K_1}{\C}$, and $`t_2=\mtyp{\K_2}{\C\isubst{a+\K_1}}$.
  Let $\H_1=\fc{\zero}{\K_1}{\I}$ and $\H_2=\fc{\H_1}{\K_2}{\I}$.
  Then $\H_1+\H_2=\fc{\zero}{\K}{\I}$, and~$\H_2$ is also equal to $\fc{\zero}{\K_2}{\I\isubst[b]{\H_1+b}}$.
  Just like the previous case, we can strengthen the hypothesis in~\eqref{eq:hyp-fix1}, \eqref{eq:hyp-fix3} and~\eqref{eq:hyp-fix4} and derive
  \begin{center}
    \AxiomC{$
      \begin{array}{r@{\,}l}
        \judg{(b,\fiv)}{(b<\H_1,\ictx)}{`D,x:\mtyp{\I}{\A}&}{\J}{\t}{\mtyp{1}{\B}}\\
        %\ijudg{\fiv}{\ictx&}{\H_1\eqiv\fc{0}{\K_1}{\I}} \\
        \sjudg{(a,b,\fiv)}{(a<\I,b<\H_1,\ictx)&}{
        \B\isubst{\zero}\isubst[b]{\fc{b+1}{a}{\I}+b+1}`<\A}\\
        \sjudg{(a,\fiv)}{(a<\K_1,\ictx)&}{
        \B\isubst{\zero}\isubst[b]{\fc{0}{a}{\I}}`<\C} \\
      \end{array}
      $}
    \UnaryInfC{\judg{\fiv}{\ictx}{\sum_{b<\H_1}`D}{\H_1+\sum_{b<\H_1}\J}{\fix{\t}}{\mtyp{\K_1}{\C}}}
    \DisplayProof
  \end{center}
  Moreover, if we substitute~$b$ with $b+\H_1$ in~\eqref{eq:hyp-fix3} and we strengthen the constraints (since~\eqref{eq:hyp-fix2} implies \ijudg{\fiv}{\ictx,b<\H_2}{\ictx,b+\H_1<\H}), we get
  \begin{displaymath}
    \sjudg{(a,b,\fiv)}{(a<\I,b<\H_2,\ictx)}{
      \B\isubst{\zero}\isubst[b]{\textstyle\fc{b+1}{a}{\I}+b+1}
      \isubst[b]{\H_1+b}`<\A\isubst[b]{\H_1+b}}.
  \end{displaymath}
  But $\big(\fc{b+1}{a}{\I}+b+1\big)\isubst[b]{\H_1+b} = \fc{\H_1+b+1}{a}{\I}+\H_1+b+1$ and $\fc{\H_1+b+1}{a}{\I} = \fc{b+1}{a}{(\I\isubst[b]{\H_1+b})}$.
  Hence $\B\isubst{\zero}\isubst[b]{\textstyle\fc{b+1}{a}{\I}+b+1}\isubst[b]{\H_1+b} = 
  \B\isubst[b]{\H_1+b}\isubst{\zero}\isubst[b]{\textstyle\fc{b+1}{a}{(\I\isubst[b]{\H_1+b})}+b+1}$.
  \\
  In the same way we can substitute~$a$ with $a+\K_1$ in~\eqref{eq:hyp-fix4}:
  \begin{displaymath}
    \sjudg{(a,\fiv)}{(a<\K_2,\ictx)}{
      \B\isubst{\zero}\isubst[b]{\textstyle\fc{0}{a+\K_1}{\I}}`<\C\isubst{a+\K_1}}
  \end{displaymath}
  But $\fc{0}{a+\K_1}{\I} = \H_1+\fc{\H_1}{a}{\I} = \H_1+\fc{\zero}{a}{\I\isubst[b]{\H_1+b}}$, and so $\B\isubst{\zero}\isubst[b]{\fc{0}{a+\K_1}{\I}}$ is equivalent to $\B\isubst[b]{\H_1+b}\isubst{\zero}\isubst[b]{\fc{0}{a}{\I\isubst[b]{\H_1+b}}}$.
  Finally, by substituting also~$b$ with $b+\H_1$ in (\ref{eq:hyp-fix1}) we can derive
  \begin{center}
    \AxiomC{$
      \begin{array}{c}
        \judg{(b,\fiv)}{(b<\H_2,\ictx)}{`D\isubst[b]{\H_1+b},
          x:(\mtyp{\I}{\A})\isubst[b]{\H_1+b}}{
          \J\isubst[b]{\H_1+b}}{\t}{\mtyp{1}{\B\isubst[b]{\H_1+b}}}\\
        %\ijudg{\fiv}{\ictx&}{\H_1\eqiv\fc{0}{\K_1}{\I}} \\
        \sjudg{(a,b,\fiv)}{(a<\I,b<\H_2,\ictx)}{\B\isubst[b]{\H_1+b}\isubst{\zero}\isubst[b]{\fc{b+1}{a}{(\I\isubst[b]{\H_1+b})}+b+1} `< \A\isubst[b]{\H_1+b}}\\
        \sjudg{(a,\fiv)}{(a<\K_2,\ictx)}{
        \B\isubst[b]{\H_1+b}\isubst{\zero}\isubst[b]{\fc{0}{a}{\I\isubst[b]{\H_1+b}}}`<
        \C\isubst{a+\K_1}} \\
      \end{array}
      $}
    \UnaryInfC{\judg{\fiv}{\ictx}{\sum_{b<\H_2}`D\isubst[b]{\H_1+b}}{
        \H_2+\sum_{b<\H_2}\J\isubst[b]{\H_1+b}}{\fix{\t}}{\mtyp{\K_2}{\C\isubst{a+\K_1}}}}
    \DisplayProof
  \end{center}
  So we can conclude with $`G_1=\sum_{a<\H_1}`D$, $`G_2=\sum_{a<\H_2}`D\isubst{a+\H_1}$, $\N_1=\H_1+\sum_{a<\H_1}\J$ and $\N_2=\H_2+\sum_{a<\H_2}\J\isubst{a+\H_1}$.
\end{proof}

\begin{lemma}[Parametric Splitting]
  \label{lem:cut-boundedsum}
  If $\judg{\fiv}{\ictx}{`G}{\M}{\v}{\sum_{c<\J}`s}$
  is derivable, then there exist an index~\N\ and a context~$`D$ such that one can derive $\judg{c,\fiv}{c<\J,\ictx}{`D}{\N}{\v}{`s}$, and
  $\ijudg{\fiv}{\ictx}{\sum_{c<\J}\N\leq\M}$
  and $\sjudg{\fiv}{\ictx}{`G`<\sum_{c<\J}`D}$.
\end{lemma}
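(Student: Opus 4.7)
The plan is to prove this by case analysis on the value $\v$, mirroring the structure of the proof of Lemma~\ref{lem:cut-sum} (the binary Splitting), but using the bounded sum structure parametrically over a fresh index variable $c$. The essential tool will be Lemma~\ref{lem:isubst} (Index Substitution), which lets us reindex a derivation by substituting an index variable with a compound index term.

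\textbf{Base case} $\v = \nb$: Since any type of a numeral is of the form $\Nat{\I}{\J}$, and bounded sum on Nat types is idempotent ($\sum_{c<\J'}\Nat{\I}{\J}=\Nat{\I}{\J}$ when $c$ is not free), we simply take $\N=\zero$ and $`D=`G$. \textbf{Abstraction case} $\v = `lx.\t$: The type $\sum_{c<\J}`s$ must be a modal type $\mtyp[e]{\I'}{(\freccia{`s'}{`t'})}$ obtained by summing instances $`s = \mtyp[b]{\K}{(\freccia{`s'}{`t'})\isubst[e]{b+\sum_{d<c}\K\isubst{d}}}$ with $\I' = \sum_{c<\J}\K$. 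The derivation ends in the $(\multimap)$ rule, from a premise $\judg{(e,\fiv)}{(e<\I',\ictx)}{`D',x:`s'}{\K'}{\t}{`t'}$. Applying Lemma~\ref{lem:isubst} to substitute $e$ with $b+\sum_{d<c}\K\isubst{d}$ yields a derivation under constraints $b<\K$ and $c<\J$ (after Lemma~\ref{lem:strength} to strengthen), and reapplying $(\multimap)$ produces the parametric judgement typing $`lx.\t:`s$ with weight $\N = \K+\sum_{b<\K}\K'\isubst[e]{b+\sum_{d<c}\K\isubst{d}}$ and context $`D=\sum_{b<\K}`D'\isubst[e]{b+\sum_{d<c}\K\isubst{d}}$. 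The required inequalities follow from the reindexing identity $\sum_{c<\J}\sum_{b<\K}\varphi = \sum_{e<\sum_{c<\J}\K}\varphi$ (with appropriate substitution of $e$).

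\textbf{Fixpoint case} $\v = \fix{\t}$: This is the step I expect to be the most delicate. We follow the analogous structure of the fixpoint clause in the proof of Lemma~\ref{lem:cut-sum}, but now parametric in $c$. From the conclusion typed at $\mtyp{\sum_{c<\J}\K_c}{\C}$, we derive a parametric conclusion at $`s$ (the $c$-indexed summand) by first substituting $a$ with $a+\sum_{d<c}\K_d$ in the subtyping premise~\eqref{eq:hyp-fix4}, and substituting $b$ with $b+\sum_{d<c}\H_d$ (where $\H_c=\fc{0}{\K_c}{\I\isubst[b]{\sum_{d<c}\H_d + b}}$) in premises~\eqref{eq:hyp-fix1} and~\eqref{eq:hyp-fix3}. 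The crux is the identity $\fc{0}{\sum_{c<\J}\K_c}{\I} = \sum_{c<\J}\H_c$ on forest cardinalities, which expresses that the recursion tree of the whole fixpoint decomposes into $\J$ consecutive subforests; combined with the composition law $\fc{p+1}{a}{\I}\isubst[b]{\sum_{d<c}\H_d+b}=\fc{p+1}{a}{(\I\isubst[b]{\sum_{d<c}\H_d+b})}$, this reconstructs the shifted subtyping premise correctly and lets us reapply the $(\textit{Fix})$ rule parametrically.

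In all three cases the bounds $\ijudg{\fiv}{\ictx}{\sum_{c<\J}\N\leq\M}$ and $\sjudg{\fiv}{\ictx}{`G`<\sum_{c<\J}`D}$ follow from re-associating the double bounded sums using Lemma~\ref{lem:isubst}. The hard part will be verifying rigorously the forest-cardinality identities in the fixpoint case, since everything else reduces to bookkeeping with substitution, strengthening, and reindexing of bounded sums.
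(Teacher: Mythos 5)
Your proposal is correct and follows essentially the same route as the paper: the paper's proof likewise proceeds by reusing the technique of the binary Splitting Lemma, applying Strengthening and Index Substitution to reindex the premises --- substituting the bound index variable by $a+\sum_{c'<c}\L\isubst[c]{c'}$ in the abstraction case, and, for the fixpoint, introducing exactly the cumulative forest-cardinality index $\H'$ (defined so that the $c$-th subforest starts where the previous ones end) before shifting $b$ and $a$ in the premises. The forest-cardinality decomposition you flag as the crux is indeed the only nontrivial identity the paper relies on.
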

\begin{proof}
  The proof uses the same technique as for Lemma~\ref{lem:cut-sum}.
  If~\v\ is a lambda abstraction or a fixpoint, then $\sum_{c<\J}`s$ is on the form $\mtyp{\sum_{c<\J}\L}{\C}$, where $\mtyp{\L}{\C\isubst{a+\sum_{c'<c}\L\isubst[c]{c'}}}=`s$.
  Then the result also follows from Strengthening (Lemma~\ref{lem:strength}) and Index Substitution (Lemma~\ref{lem:isubst}):
  for the lambda abstraction, substitute~$a$ with $a+\sum_{c'<c}\L\isubst[c]{c'}$ in~(\ref{eq:hyp-abstr-one}).
  For the fixpoint consider the index~$\H'$ satisfying the equations $\H'\isubst[c]{\zero}=\fc{\zero}{\L\isubst[c]{\zero}}{\I}$ and $\H'\isubst[c]{\inb[i]+\one}=\fc{\H\isubst[c]{\inb[i]}}{\L\isubst[c]{\inb[i]+\one}}{\I}$.
  Then substitute~$b$ with $b+\sum_{c'<c}\H'\isubst[c]{c'}$ (and add the constraint $c<\J$ in the context) in~\eqref{eq:hyp-fix1} and~\eqref{eq:hyp-fix3}, and substitute~$a$ with $a+\sum_{c'<c}\L\isubst[c]{c'}$ in~\eqref{eq:hyp-fix4} to derive the result.  
\end{proof}

One can easily realise \emph{why} these results are crucial for subject
reduction: whenever the substituted value flows through a type derivation,
there are various places where its type changes, namely when it reaches
instances of the typing rules $(\mathit{App})$, $(\multimap)$, $(\mathit{If})$ and $(\mathit{Rec})$:
in all these cases the type derivation for the value must be modified,
and the splitting lemmas certify that this is possible. We can this
way reach the key intermediate result:
\begin{lemma}[Substitution]
  \label{lem-termsubst}
  If
  $\judg{\fiv}{\ictx}{`G,x:`s}{\M}{\t}{`t}$
  and $\judg{\fiv}{\ictx}{`0}{\N}{\v}{`s}$
  are both derivable, then there is an index~$\K$ such that
  $\judg{\fiv}{\ictx}{`G}{\K}{\t\tsubst{\v}}{`t}$
  and $\ijudg{\fiv}{\ictx}{\K\leq \M+\N}$.
\end{lemma}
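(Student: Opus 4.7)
The plan is to proceed by induction on the structure of the type derivation of $\judg{\fiv}{\ictx}{`G,x:`s}{\M}{\t}{`t}$, doing case analysis on its last rule. In every case I exhibit a derivation of $\judg{\fiv}{\ictx}{`G}{\K}{\t\tsubst{\v}}{`t}$ together with the required bound $\K\leq\M+\N$, freely using Strengthening, Context Weakening, Index Substitution, and the two splitting lemmas just proved.

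The base cases are immediate. If the last rule is the axiom typing $x$ itself, then $`s=`t$ and $\M=\zero$; take $\K=\N$ and reuse the derivation of $\v$, enlarged to the context $`G$ by Context Weakening. If the last rule types a variable $y\neq x$, the substitution is the identity and the same derivation works. For subsumption, apply the induction hypothesis to the premise and re-apply the same subsumption step, the new index inequality following by transitivity. The constant, successor, and predecessor rules pass the substitution directly to their sole premise.

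The two interesting families are the rules that split the context additively (App and If) and those that split it parametrically ($\multimap$ and Fix). In the former the context of the conclusion is $`G_1\uplus`G_2$ and the declaration $x:`s$ splits as $x:`s_1$ in one premise and $x:`s_2$ in the other, with $`s=`s_1\uplus`s_2$. Applying the Splitting Lemma to the derivation of $\v$ yields two derivations of $\v$ at types $`s_1,`s_2$ with weights $\N_1,\N_2$ satisfying $\N_1+\N_2\leq\N$. The induction hypothesis applied to each premise produces two substituted subderivations, recombined via the same rule, with total weight bounded by $(\K_1+\N_1)+(\K_2+\N_2)\leq(\K_1+\K_2)+\N=\M+\N$. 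In the parametric case the conclusion context has the form $\sum_{a<\I}`G'$, so $`s=\sum_{a<\I}`s''$ for some $`s''$ depending on $a$. Parametric Splitting produces a derivation of $\v:`s''$ under the enriched index context $(a<\I,\ictx)$ with weight $\N'$ satisfying $\sum_{a<\I}\N'\leq\N$. Applying the induction hypothesis inside the premise and re-applying the rule gives final weight $\I+\sum_{a<\I}(\K+\N')=(\I+\sum_{a<\I}\K)+\sum_{a<\I}\N'\leq\M+\N$ by elementary arithmetic on bounded sums.

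The main obstacle is the Fix case, whose rule carries three heavy side premises involving the forest cardinalities $\fc{b+1}{a}{\I}+b+1$ and $\fc{0}{a}{\I}$. After Parametric Splitting introduces a bound index and possibly shifts $b$ by an offset, each of these subtyping constraints must be re-established in the shifted form. Fortunately the necessary index identities, in particular the compatibility of shifts with forest cardinality, have already been verified inside the proof of Parametric Splitting, so this step reduces to careful bookkeeping rather than a genuinely new manipulation; Strengthening is used to carry the extra constraints through, and no further index-level reasoning is needed.
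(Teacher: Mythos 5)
Your proof follows exactly the route the paper takes: induction on the typing derivation, invoking the Splitting Lemma for the additively split rules (\textit{App}, \textit{If}) and Parametric Splitting for the $(\multimap)$ and (\textit{Fix}) rules; the paper states this in a single line and your write-up is a correct expansion of that sketch. One minor remark: in the (\textit{Fix}) case the subtyping side premises do not mention the term context at all, so they carry over verbatim --- the forest-cardinality shifting you worry about lives entirely inside the proof of Parametric Splitting (namely when the \emph{value} being split is itself a fixpoint), not in this case of the induction.
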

\begin{proof}
  The proof goes by induction on the derivation of the judgement \judg{\fiv}{\ictx}{`G,x:`s}{\M}{\t}{`t}, making intense use of Lemma~\ref{lem:cut-sum} and Lemma~\ref{lem:cut-boundedsum}.
\end{proof}
Given Lemma~\ref{lem-termsubst}, proving Proposition~\ref{prop:sr} is routine: the only
two nontrivial cases are those where the fired redex is a $\beta$-redex or the
unfolding of a recursively-defined function, and both consist in a substitution.
Observe how Subject Reduction already embeds a form of \emph{extensional} soundness
for \dlpcfv, since types are preserved by reduction.
As an example, if one builds a type derivation for $\ejudg{}{\I}{\t}{\Nat{2}{7}}$, then the normal form of $\t$ (if it exists) is guaranteed to be a constant between $2$ and $7$.
Observe, on the other hand, than nothing is known about
the \emph{complexity} of the underlying computational process yet, since the weight $\I$ does not necessarily decrease along reduction.
This is the topic of the following section.

\subsection{Intentional Soundness}
\label{sec:sound}
%%%%%%%%%%%%%%%%%%%%%%%%%%%%%%%%%%%%%%%%%%%%%%%%%%%%%%%%%%%%
\begin{figure*}[htbp]                                       %
  \centering
  \condinc{
    \framebox{
      \input{fig-stacksize}}
  }{
    \input{fig-stacksize}
  }
  \caption{Size of processes}
  \label{fig:size-proc}
\end{figure*}                                               %
%%%%%%%%%%%%%%%%%%%%%%%%%%%%%%%%%%%%%%%%%%%%%%%%%%%%%%%%%%%% 
In this section, we prove the following result:
\begin{theorem}[Intensional soundness]
  \label{theo:isnd}
  For any term~\t, if $$\ejudg{}{\H}{\t}{\Nat{\I}{\J}}$$
  then $\t\conv\nb[m]$ where $n\leq\ts{\t}`.(\itp[]{\H}+1)$
  and $\itp[]{\I}\leq m\leq\itp[]{\J}$~.
\end{theorem}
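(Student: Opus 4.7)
The plan is to establish the bound by going through the \cek\ machine, exploiting Adequacy (Proposition~\ref{prop:eq}) to transfer both convergence and the step count back to $\tov$. The first ingredient is to extend typing from terms to value closures, closures, stacks (already treated in Figure~\ref{fig:acc-stack}) and processes, so that every well-typed process carries both a result type and a weight; a process judgement then combines a closure typing with a matching stack typing, adding their weights. The initial process $\proc{\clo[\emptyset]{\t}}{\emptystack}$ would be typable with weight $\H$ and type $\Nat{\I}{\J}$ directly from the hypothesis.

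The second ingredient is a subject reduction lemma for $\tocek$: whenever a well-typed process of weight $\K$ transitions to a successor, the successor is well-typed with a weight $\K'$ satisfying $\itp[]{\K'} \leq \itp[]{\K}$ for every rule, and with the strict inequality $\itp[]{\K'} \leq \itp[]{\K} - 1$ in the two fun-cases (the $`b$-step on $`lx.\t$ and the unfolding of $\fix{\t}$). These fun-cases rely on the Splitting and Substitution results (Lemmas~\ref{lem:cut-sum}, \ref{lem:cut-boundedsum} and~\ref{lem-termsubst}) to align the type of the substituted value with the assumption on the bound variable inside the body.

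The third ingredient is a size argument based on the process size $\sip{\cdot}$ of Figure~\ref{fig:size-proc}. A direct case analysis of the rules in Figures~\ref{fig:cek-val} and~\ref{fig:cek-clo} shows that every non-fun step strictly decreases the process size, while every fun step increases it by at most $\ts{\t}$, since the newly exposed closure contains a subterm of the original program~\t\ whose multiplicative size is bounded by $\ts{\t}$. Starting from $\proc{\clo[\emptyset]{\t}}{\emptystack}$ of initial size $\ms{\t} \leq \ts{\t}$ and weight $\H$, and knowing that the weight bounds the number of fun steps by $\itp[]{\H}$, an amortised counting yields at most $\ts{\t}\cdot(\itp[]{\H}+1)$ total transitions, from which termination follows.

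Finally, termination of the \cek\ machine combined with Adequacy gives $\t\tov^{*}\nb[m]$ in no more steps, so $\t\conv\nb[m]$ with $n\leq\ts{\t}\cdot(\itp[]{\H}+1)$. The bounds $\itp[]{\I}\leq m\leq\itp[]{\J}$ follow from Subject Reduction (Proposition~\ref{prop:sr}) applied along the $\tov$-reduction: the type $\Nat{\I}{\J}$ is preserved and, by inspection of rule $(n)$ together with subsumption, it is derivable for $\nb[m]$ only when $\itp[]{\I}\leq m\leq\itp[]{\J}$. The main obstacle I anticipate is the subject reduction step for the unfolding of $\fix$, where one must show that the weight decrement corresponds precisely to one node of the recursion tree described by the forest cardinality; matching the indices at successive nodes requires careful use of Lemma~\ref{lem:cut-boundedsum} together with the defining identities of $\fc{a}{\I}{\J}{\K}$.
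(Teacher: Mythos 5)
Your proposal follows essentially the same route as the paper: lifting \dlpcfv\ to closures, stacks and processes, proving a weighted subject reduction for $\tocek$ in which either the size strictly decreases at constant weight or the weight strictly decreases while the size grows by at most the size of a term already present in the process (hence bounded by $\ts{\t}$), and then amortising to get the $\ts{\t}\cdot(\itp[]{\H}+1)$ bound, with Splitting, Parametric Splitting and Substitution doing the work in the fun-cases. The only cosmetic difference is that you read the bounds $\itp[]{\I}\leq m\leq\itp[]{\J}$ off the $\tov$-reduction via Proposition~\ref{prop:sr} rather than off the final \cek\ process, which is equally fine.
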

Roughly speaking, this means that \dlpcfv\ also gives us some
sensible information about the time complexity of evaluating
typable \PCF\ programs. The path towards Theorem~\ref{theo:isnd}
is not too short: it is necessary to lift \dlpcfv\ to a type
system for closures, environments and processes, as defined in 
Section~\ref{sec:cek}. Actually, the type system can be
easily generalised to closures by the rule below:
$$
\infer[]
{\cjudg{\fiv}{\ictx}{\K+\sum_{1\leq i\leq n}\J_i}{
    \clo[\{x_1\mapsto\clov_1;\cdots;x_n\mapsto\clov_n\}]{t}}{`t}}
{
  \begin{array}{c}
    \judg{\fiv}{\ictx}{x_1:`s_1,\dots,x_n:`s_n}{\K}{\t}{`t}\\
    \cjudg{\fiv}{\ictx}{\J_i}{\clov_i}{`s_i}\\
  \end{array}
}
$$
Lifting everything to stacks, on the other hand, requires more
work, see Figure~\ref{fig:acc-stack}.
We say that a stack~\stack\ is \fifi-\emph{acceptable} for~$`s$ with type~$`t$ with cost~\I\ (notation: \stjudg{\fiv}{\ictx}{\I}{\stack}{`s}{`t}) when it interacts well with closures of type~$`s$ to product a process of type~$`t$.
Indeed, a \emph{process} can be typed as follows:
$$
\infer[]
{\pjudg{\fiv}{\ictx}{\J+\K}{\proc{\cloc}{\stack}}{`t}}
{
  \begin{array}{c}
    \stjudg{\fiv}{\ictx}{\J}{\stack}{`s}{`t}\\
    \cjudg{\fiv}{\ictx}{\K}{\cloc}{`s}\\
  \end{array}
}  
$$
This way, also the notion of weight has been lifted to processes,
with the hope of being able to show that it strictly decreases
at every evaluation step. Apparently, this cannot be achieved in full:
sometimes the weight of a process does not change, but in that case
another parameter is guaranteed to decrease, namely the process \emph{size}.
The size $\sip{\proc{\cloc}{\stack}}$ of 
$\proc{\cloc}{\stack}$, is defined as 
$\sic{\cloc}+\sis{\stack}$, where:
\begin{varitemize}
\item
  The size $\sic{\cloc}$ of a closure $\clo{\t}$ is the
  \emph{multiplicative} size of $\t$ (\cf Section~\ref{sec:prg}).
\item
  The size of $\sic{\stack}$ is the sum of the sizes of all closures
  appearing in $\stack$ plus the number of occurrences of
  symbols (different from $\emptystack$ and $\mathsf{fun}$) in $\stack$.
\end{varitemize} 
  The formal definition of $\sip{\proc{\cloc}{\stack}}$ is given in Figure~\ref{fig:size-proc}.

\condinc{
The size of a process decreases by any evaluation steps, except the two ones performing a substitution~\eqref{eq:fun-rul1} and~\eqref{eq:fun-rul2}.
However, these two reduction rules make the \textit{weight} of a process decrease, as formalised by the following proposition.
By the way, these are the cases in which a box is opened up in the underlying linear logic proof.
}{}
\begin{proposition}[Weighted Subject Reduction]
  \label{prop:decreasing}
  Assume $\process\tocek\R$ and $\pjudg{\fiv}{\ictx}{\I}{\process}{`t}$. 
  Then $\pjudg{\fiv}{\ictx}{\J}{\R}{`t}$ and
  \begin{itemize}
  \item either $\ijudg{\fiv}{\ictx}{\I=\J}$ and $\sip{\process}>\sip{\R}$,
  \item 
    or $\ijudg{\fiv}{\ictx}{\I>\J}$ and
    $\sip{\process}+\ts{\s}>\sip{\R}$, where
    $\s$ is a term appearing in $\process$.
  \end{itemize}
\end{proposition}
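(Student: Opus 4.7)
The plan is to prove the statement by case analysis on the \CEK\ rule used in $\process \tocek \R$, inverting the process-typing rule in each case to expose a stack-typing derivation $\stjudg{\fiv}{\ictx}{\J_0}{\stack}{`s}{`t}$ and a closure-typing derivation $\cjudg{\fiv}{\ictx}{\K_0}{\cloc}{`s}$, then rebuilding a typing derivation for $\R$ and comparing weights and sizes. The twelve reduction rules fall into two groups: the two \emph{substitution} rules of Figure~\ref{fig:cek-val} that grow the environment with a fresh binding, and the ten \emph{administrative} rules that merely reshape the process without opening any box in the underlying linear logic proof.

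For the administrative rules, the pattern is uniform. Each rule corresponds to inverting a single closure-typing rule (one of $(\textit{App})$, $(\textit{If})$, $(s)$, $(p)$, $(\textit{Ax})$, or $(n)$) whose subderivations reassemble into a new closure-typing for $\R$ together with a fresh application of a stack-typing rule from Figure~\ref{fig:acc-stack} that consumes the newly pushed frame. Weights only get reshuffled (modulo an extra \textit{Subs} step to rearrange the $\uplus$ and the associated sums), so $\ijudg{\fiv}{\ictx}{\I = \J}$. The process size strictly decreases because, inspecting Figure~\ref{fig:size-proc}, each such step removes a constructor (worth $+2$ in the multiplicative size of the closure) and pushes a stack frame whose cost is at most $+1$ more than the size of any detached subterm. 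The integer-consuming rules on $\stacks$, $\stackp$, $\stackif{\t}{\u}{\env}$, and the shuffle $\clov \star \stackarg{\cloc} \tocek \cloc \star \stackfun{\clov}$ are even more immediate, since no closure typing needs to be invented.

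The only cases where the weight strictly drops are the two substitution rules. For the abstraction case $\clov \star \stackfun{\clo{`lx.\t}} \tocek \clo[\envelt{\clov}\cdot\env]{\t} \star \stack$, inverting the stack-typing rule for $\stackfun$ yields $\cjudg{\fiv}{\ictx}{\J_1}{\clo{`lx.\t}}{\mtyp{\one}{(\freccia{`s}{`t})}}$, and this in turn can only come from the $(\multimap)$ rule applied with $\I = \one$, charging an additive $1 + \K$ to the weight for the single copy of $\t$. After reduction, $\t$ is typed directly at weight $\K$ inside the new closure, and the existing typing of $\clov$ becomes the typing of the new environment entry; the $1$ disappears, so the weight strictly drops. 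The process size may grow because the body $\t$ now sits inside the new closure, but the growth is bounded by $\ms{\t} < \ts{`lx.\t}$, so taking $\s = `lx.\t$ (a term occurring in $\process$) yields the required bound.

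The main obstacle is the fixpoint rule $\clov \star \stackfun{\clo{\fix{\t}}} \tocek \clo[\envelt{\clo{\fix{\t}}}\cdot\env]{\t} \star \stackarg{\clov}$: the original typing of $\clo{\fix{\t}}$ comes from $(\textit{Fix})$ with $\K = \one$ and forest cardinality $\H = \fc{0}{\one}{\I}$, and inside $\t$ the variable $x$ has the modal type $\mtyp{\I}{\A}$, so each of the recursive occurrences of $x$ (which after reduction will all point to $\clo{\fix{\t}}$ itself) needs its own typing derivation. The plan is to peel the root of the recursion tree, using Splitting (Lemma~\ref{lem:cut-sum}) and Parametric Splitting (Lemma~\ref{lem:cut-boundedsum}) in exactly the same style as in the proof of the latter lemma, so as to produce a parametric family of typings for $\clo{\fix{\t}}$ indexed by the children of the root $b = 0$; the arithmetic identity $\fc{0}{\one}{\I} = 1 + \fc{1}{\I\isubst{\zero}}{\I}$ exposes the one recursion node consumed by the step and accounts for the strict drop in weight. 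As in the abstraction case, any size blow-up is controlled by the size of $\fix{\t}$, which is a term appearing in $\process$. With all twelve cases treated, the disjunction in the statement follows: administrative rules give the first branch, substitution rules the second.
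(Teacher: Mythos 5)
Your proposal is correct and follows essentially the same route as the paper: a case analysis on the \CEK\ transitions in which the administrative rules preserve the weight while strictly decreasing the process size, and the two $\mathsf{fun}$-rules decrease the weight by one (inverting $(\multimap)$ with $\I=\one$, resp.\ peeling the root of the recursion tree via the forest-cardinality identity), with Splitting and Parametric Splitting generalised to value closures doing the work of redistributing the environment typings. The only nitpick is notational: in the peeling identity the substitution $\I\isubst{\zero}$ should be on the tree variable $b$, i.e.\ $\fc{0}{\one}{\I}=1+\fc{1}{\I\isubst[b]{\zero}}{\I}$.
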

\begin{proof}
  \begin{varenumerate}
  \item
    If $\process\tocek\process[R]$ with a non substitution rule (any rule of Figure~\ref{fig:cek-val} or Figure~\ref{fig:cek-clo} except \eqref{eq:fun-rul1} and~\eqref{eq:fun-rul2}), then it is easy to check that $\sip{\process}>\sip{\process[R]}$.
    Moreover, in all these cases \process\ and \process[R] have the same type and the same weight.
    We detail some cases:
    \begin{varitemize}
    \item If $\process~=~\proc{\clov}{\stackarg{\cloc}}\quad\tocek\quad
      \proc{\cloc}{\stackfun{\clov}}~=~\process[R]$,\quad
      then the typing of~\process\ derives from
      \begin{center}
        \AxiomC{$
          \begin{array}{r@{\,}l}
            \cjudg{\fiv}{\ictx&}{\H}{\cloc}{`s_0\isubst{\zero}}\\
            \stjudg{\fiv}{\ictx&}{\L}{\stack'}{`t_0\isubst{\zero}}{`t}\\
            \sjudg{\fiv}{\ictx&}{`s`<\mtyp{\one}{(\freccia{`s_0}{`t_0})}}\\
            \ijudg{\fiv}{\ictx&}{\J=\H+\L}
          \end{array}      
          $}
        \UnaryInfC{\stjudg{\fiv}{\ictx}{\J}{\stackarg{\cloc}}{`s}{`t}}
        \AxiomC{$
          \begin{array}{r@{\,}l}
            \cjudg{\fiv}{\ictx&}{\K}{\clov}{`s}\\
            \ijudg{\fiv}{\ictx&}{\I=\J+\K}      
          \end{array}
          $}
        \BinaryInfC{\pjudg{\fiv}{\ictx}{\I}{\proc{\clov}{\stackarg{\cloc}}}{`t}}
        \DisplayProof
      \end{center}
      Hence since subtyping is derivable (Lemma~\ref{lem:subtyp}) we can derive for~\process[R]:
      \begin{center}
        \AxiomC{$
          \begin{array}{r@{\,}l}
            \cjudg{\fiv}{\ictx&}{\K}{\clov}{\mtyp{\one}{(\freccia{`s_0}{`t_0})}}\\
            \stjudg{\fiv}{\ictx&}{\L}{\stack}{`t_0\isubst{\zero}}{`t}
          \end{array}      
          $}
        \UnaryInfC{\stjudg{\fiv}{\ictx}{\L+\K}{\stackfun{\clov}}{`s_0\isubst{\zero}}{`t}}
        \AxiomC{$
          \begin{array}{r@{\,}l}
            \cjudg{\fiv}{\ictx&}{\H}{\cloc}{`s_0\isubst{\zero}}\\
            \ijudg{\fiv}{\ictx&}{\I=\H+\L+\K}      
          \end{array}
          $}
        \BinaryInfC{\pjudg{\fiv}{\ictx}{\I}{\proc{\cloc}{\stackfun{\clov}}}{`t}}
        \DisplayProof
      \end{center}
    \item If $\process=~\proc{\clo{\t\u}}{\stack}~\tocek~ 
      \proc{\clo{\t}}{\stackarg{\clo{\u}}}~=\process[R]$,\quad
      then the typing of~\process\ derives from
      \begin{center}
        \def\defaultHypSeparation{\hskip 0pt}
        \def\ScoreOverhang{0pt}
        \AxiomC{$
          \begin{array}{r@{\,}l}
            \judg{\fiv}{\ictx}{x_1:`m_1,\dots,x_n:`m_n&}{\K}{\t}{\mtyp{\N}{\freccia{`k}{`h}}}\\
            \judg{\fiv}{\ictx}{x_1:`h_1,\dots,x_n:`h_n&}{\H}{\u}{`k\isubst{\zero}}\\
            %\cjudg{\fiv}{\ictx&}{\J_i}{\clov_i}{`s_i}\\
            \sjudg{\fiv}{\ictx&}{`s_i `< `m_i\uplus`h_i}\\
            \ijudg{\fiv}{\ictx&}{\N\geq\one}\\          
            \sjudg{\fiv}{\ictx&}{`h\isubst{\zero} `< `s} 
          \end{array}
          $}
        \UnaryInfC{\judg{\fiv}{\ictx}{x_1:`s_1,\dots,x_n:`s_n}{\H+\K}{\t\u}{`s}}
        \AxiomC{\cjudg{\fiv}{\ictx}{\J_i}{\clov_i}{`s_i}}
        \BinaryInfC{\cjudg{\fiv}{\ictx}{\H+\K+\sum_{i\leq n}\J_i}{\clo{\t\u}}{`s}}
        \AxiomC{$
          \begin{array}{c}
            \stjudg{\fiv}{\ictx}{\J}{\stack}{`s}{`t}\\
            \ijudg{\fiv}{\ictx}{\I=\J+\H+\K+\sum_i\J_i}      
          \end{array}
          $}
        \BinaryInfC{\pjudg{\fiv}{\ictx}{\I}{\proc{\clo{\t\u}}{\stack}}{`t}}
        \DisplayProof
        \def\defaultHypSeparation{\hskip.2in}
        \def\ScoreOverhang{4pt}
      \end{center}
      In particular, since subtyping is derivable, \cjudg{\fiv}{\ictx}{\J_i}{\clov_i}{`m_i\uplus`h_i} for each~$i$.
      By Lemma~\ref{lem:cut-sum} (that can be trivially extended to closures), it means that there are some~$<X_i, \N_i$ such that
      \begin{align*}
        &\cjudg{\fiv}{\ictx}{\M_i}{\clov_i}{`m_i}\\
        &\cjudg{\fiv}{\ictx}{\N_i}{\clov_i}{`h_i}\\
        &\ijudg{\fiv}{\ictx}{\M_i+\N_i=\J_i}
      \end{align*}  
      Hence both these judgements are derivable:
      \begin{center}
        \AxiomC{\judg{\fiv}{\ictx}{x_1:`m_1,\dots,x_n:`m_n}{\K}{\t}{\mtyp{\one}{\freccia{`k}{`h}}}}
        \AxiomC{\cjudg{\fiv}{\ictx}{\M_i}{\clov_i}{`m_i}}
        \BinaryInfC{\cjudg{\fiv}{\ictx}{\K+\sum_{i\leq n}\M_i}{\clo{t}}{\mtyp{\one}{\freccia{`k}{`h}}}}
        \DisplayProof
      \end{center}
      \qquad and\qquad
      \AxiomC{\judg{\fiv}{\ictx}{x_1:`h_1,\dots,x_n:`h_n}{\H}{\u}{`k\isubst{\zero}}}
      \AxiomC{\cjudg{\fiv}{\ictx}{\N_i}{\clov_i}{`h_i}}
      \BinaryInfC{\cjudg{\fiv}{\ictx}{\H+\sum_{i\leq n}\N_i}{\clo{\u}}{`k\isubst{\zero}}}
      \DisplayProof

      Hence we can derive the following typing judgement for~\process[R] (notice that subtyping is derivable for the stacks, with contravariance in the first type):
      \begin{center}
        \def\defaultHypSeparation{\hskip 0pt}
        \def\ScoreOverhang{0pt}
        \AxiomC{$
          \begin{array}{r@{\,}l}
           \cjudg{\fiv}{\ictx&}{\K+\sum_{i\leq n}\M_i}{\clo{t}}{\mtyp{\one}{\freccia{`k}{`h}}}\\
           \ijudg{\fiv}{\ictx&}{\I=\K+\J+\H+\sum_i(\M_i+\N_i)}
          \end{array}
          $}
        \AxiomC{$
          \begin{array}{r@{\,}l}
            \cjudg{\fiv}{\ictx&}{\H+\sum_{i\leq n}\N_i}{\clo{\u}}{`k\isubst{\zero}}\\
            \stjudg{\fiv}{\ictx&}{\J}{\stack}{`h\isubst{\zero}}{`t}\\            
          \end{array}
          $}
        \UnaryInfC{\stjudg{\fiv}{\ictx}{\J+\H+\sum_i\N_i}{\stackarg{\clo{\u}}}{
          \mtyp{\one}{\freccia{`k}{`h}}}{`t}}
        \BinaryInfC{\pjudg{\fiv}{\ictx}{\I}{\proc{\clo{\t}}{\stackarg{\clo{\u}}}}{`t}}
        \DisplayProof
        \def\defaultHypSeparation{\hskip.2in}
        \def\ScoreOverhang{4pt}
      \end{center}
    \end{varitemize}
    \item If $\process\tocek\process[R]$ with a substitution rule...
  \end{varenumerate}
\end{proof}
\condinc{}{
Actually, Proposition~\ref{prop:decreasing} can be proved by carefully
analysing the various cases as for how $\process$ evolves to $\R$, \ie
the rules from Figure~\ref{fig:cek-val} and Figure~\ref{fig:cek-clo}.
Only in two of them the weight decreases, namely the ones
in which $\process$ is in the form $\proc{\clov}{\stackfun{\clov[w]}}$.
By the way, these are the cases in which a box is opened up in the underlying linear logic proof.
}
Splitting and parametric splitting play a crucial role here, once appropriately generalised to value closures.

Given Proposition~\ref{prop:decreasing}, Theorem~\ref{theo:isnd} is
within reach: the natural number $\ts{\s}$ in Proposition~\ref{prop:decreasing} 
cannot be greater than the size of the term $\t$ we start from, since the only ``new''
terms created along reduction are constants in the form $\nb$ (which have null
size). 
%%%%%%%%%%%%%%%%%%%%%%%%%%%%%%%%%%%%%%
\subsection{(Relative) Completeness}
\label{sec:complete}
%%%%%%%%%%%%%%%%%%%%%%%%%%%%%%%%%%%%%%
In this section, we will prove some results about the expressive
power of \dlpcfv, seen as a tool to prove intentional (but also extensional)
properties of \PCF\ terms. Actually, \dlpcfv\ is extremely powerful:
every first-order \PCF\ program computing the function $f:\int\rightarrow\int$
in a number of steps bounded by $g:\int\rightarrow\int$ can be proved
to enjoy these properties by way of \dlpcfv, provided two conditions
are satisfied:
\begin{varitemize}
\item
  On the one hand, the equational program $\ep$ needs to be \emph{universal},
  meaning that every partial recursive function is expressible
  by some index terms. This can be guaranteed, as an example, by the presence
  of a universal program in $\ep$.
\item
  On the other hand, all \emph{true} statements in the form $\ijudg{\fiv}{\ictx}{\I\leq\J}$
  must be ``available'' in the type system for completeness to hold. In other words, one cannot
  assume that those judgements are derived in a given (recursively enumerable) formal
  system, because this would violate G\"odel's Incompleteness Theorem. In fact,
  ours are completeness theorems \emph{relative} to an oracle for the truth of those
  assumptions, which is precisely what happens in Floyd-Hoare logics~\cite{Cook78}.
\end{varitemize}
\paragraph{\PCF\ Typing} The first step towards completeness is quite easy: propositional
type systems in the style of \PCF\ for terms, closures, stacks and processes need to be 
introduced. All of them can be easily obtained by erasing the index information from
\dlpcfv. As an example, the typing rule for the application looks like
$$
\infer[]
{\pcftjudg{`G}{\t\u}{`b}}
{\pcftjudg{`G}{\t}{`a\arr`b} 
 \qquad
 {\pcftjudg{`G}{\u}{`a}}
}
$$
while processes can be typed by the following rule
$$
\infer[]
{\pcfpjudg{\proc{\cloc}{\stack}}{`b}}
{
  \pcfstjudg{\stack}{`a}{`b}
  \qquad
  \pcfpjudg{\cloc}{`a}
}  
$$
Given any type $`s$ (respectively any type derivation $\pcftder$) of \dlpcfv, 
the \PCF\ type (respectively, the \PCF\ type derivation) obtained by
erasing all the index information will be denoted by $\erase{`s}$
(respectively, by $\erase{\pcftder}$). Of course both terms and processes enjoy subject reduction 
theorems with respect to \PCF\ typing, and their proofs are much simpler than those for 
\dlpcfv. As an example, given a type derivation $\pcftder$ for $\pcfpjudg{\process}{\NatPCF}$ (we might write $\pcftder\proves\pcfpjudg{\process}{\NatPCF}$)
and $\process\tocek\R$, a type derivation $\pcftder'$ for $\pcfpjudg{\R}{\NatPCF}$ can be easily built by
manipulating in a standard way $\pcftder$; we write $\pcftder\tocek\pcftder'$.
\paragraph{Weighted Subject Expansion}
The key ingredient for completeness is a dualisation of Weighted Subject Reduction:
\begin{proposition}[Weighted Subject Expansion]
  \label{prop:weisubjexp}
  Suppose that $\pcftder\proves\pcfpjudg{\process}{`a}$,
  that $\pcftder\tocek\pcftder'$, and that
  $\tder'\proves\pjudg{\fiv}{\ictx}{\I}{\R}{`t}$
  where $\erase{\tder'}=\pcftder'$.
  Then there is
  $$\tder\proves\pjudg{\fiv}{\ictx}{\J}{\process}{`t}$$
  with $\erase{\tder}=\pcftder$ and 
  $\ijudg{\fiv}{\ictx}{\J\leq\I+1}$. 
  Moreover, $\tder$ can be effectively computed from $\pcftder$,  
  $\tder'$ and $\pcftder'$.
\end{proposition}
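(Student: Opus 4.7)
The argument mirrors that of Proposition~\ref{prop:decreasing} in reverse, proceeding by case analysis on the CEK rule that fires in $\process \tocek \R$. Since the CEK machine is deterministic, this rule is uniquely determined by $\process$. In each case we construct $\tder$ explicitly from $\tder'$ and $\pcftder$; the requirement $\erase{\tder} = \pcftder$ is automatic because the \dlpcfv\ typing rules we apply have the same structure as their \PCF\ counterparts appearing in $\pcftder$, matching it step by step.

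For each non-substitution rule (those of Figure~\ref{fig:cek-clo} together with all rules of Figure~\ref{fig:cek-val} that do not involve a $\stackfun{\clo{\lambda x.\t}}$ or $\stackfun{\clo{\fix{\t}}}$ frame), the reconstruction is a direct rearrangement of $\tder'$: we strip off its top-level process, stack, and closure rule instances to expose the relevant subderivations, and reassemble them according to the structure of $\process$. Representative is the case $\proc{\clo{\t\u}}{\stack} \tocek \proc{\clo{\t}}{\stackarg{\clo{\u}}}$, where $\tder'$ already contains typings of $\clo{\t}$, $\clo{\u}$, and $\stack$ which are exactly the ingredients needed by the $(\mathit{App})$ rule of \dlpcfv\ on the pre-reduct. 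In each such case the weight is preserved, so that $\ijudg{\fiv}{\ictx}{\J = \I}$, which implies the claimed bound $\J \leq \I + \one$.

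The substantive work lies in the two substitution rules. For the $\beta$-case $\proc{\clov}{\stackfun{\clo[\env]{\lambda x.\t}}} \tocek \proc{\clo[\envelt{\clov}\cdot\env]{\t}}{\stack}$, the derivation $\tder'$ gives a typing of $\t$ in a context assigning some modal type $`s$ to $x$, together with a matching typing of $\clov$ with type $`s$. Applying the $(\multimap)$ rule with $\I = \one$ yields a typing of $\lambda x.\t$ with type $\mtyp{\one}{(\freccia{`s}{`t})}$ whose weight is exactly one more than that of $\t$. Combining this with the typing of $\clov$ via the $\stackfun$ stack rule and the process rule reconstructs $\tder$, whose total weight exceeds that of $\tder'$ by exactly $\one$.

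The fixpoint case $\proc{\clov}{\stackfun{\clo[\env]{\fix{\t}}}} \tocek \proc{\clo[\envelt{\clo{\fix{\t}}}\cdot\env]{\t}}{\stackarg{\clov}}$ is the main technical obstacle. In $\tder'$, the environment's $\clo{\fix{\t}}$ is typed via $(\mathit{Fix})$ so as to match the use of $x$ inside $\t$, while in $\tder$ we must re-type $\fix{\t}$ via $(\mathit{Fix})$ with $\K = \one$ in order to satisfy the $\mtyp{\one}{(\freccia{`s}{`t})}$ requirement of the $\stackfun$ stack rule. The parametric premise over $b < \H$ required by the new $(\mathit{Fix})$ is assembled from the single typing of $\t$ at the root of the recursion tree (coming from the reduct's top-level closure) together with the parametric typings for the sub-trees, obtained by applying Parametric Splitting (Lemma~\ref{lem:cut-boundedsum}), suitably lifted to closures, to the typing of the environment's $\clo{\fix{\t}}$ in $\tder'$. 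The delicate part is the forest-cardinality bookkeeping: one must verify, using the defining equations of $\fc{\cdot}{\cdot}{\cdot}{\cdot}$, that $\H = \fc{0}{\one}{\I_{\mathit{rec}}}$ relates to the tree size used in $\tder'$ in such a way that the total weight of $\tder$ exceeds that of $\tder'$ by exactly $\one$, which corresponds to the single substitution step performed by the CEK rule.
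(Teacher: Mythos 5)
Your overall strategy --- case analysis on the (deterministic) \CEK\ step, reconstructing $\tder$ from the pieces of $\tder'$, with the weight increasing by one exactly at the two substitution rules --- is the right shape, but the proof relies on the wrong key lemma, and this shows up precisely in the case you call ``representative''. In the step $\proc{\clo{\t\u}}{\stack}\tocek\proc{\clo{\t}}{\stackarg{\clo{\u}}}$ the environment $\env$ is shared: in $\tder'$ each value closure $\clov_i$ of $\env$ is typed \emph{twice}, once with some $`m_i$ (inside the typing of $\clo{\t}$) and once with some $`h_i$ (inside the typing of $\clo{\u}$), whereas in $\tder$ it must be typed \emph{once} with $`m_i\uplus`h_i$. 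This is not a ``direct rearrangement'' of $\tder'$: two \dlpcfv\ derivations with the same \PCF\ skeleton but different index decorations must be merged into a single one. That is exactly what the Joining Lemma provides --- the dual of Splitting --- and it is why the paper introduces Joining and Parametric Joining for this proposition, and why those lemmas assume $\ep$ universal: the merged index terms are built by case distinction and need not be expressible in an arbitrary $\ep$. Your proof never invokes joining and never uses universality, so the application case (and likewise the conditional and fixpoint cases, where the environment is also shared) cannot be completed as written.

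The same inversion occurs in your treatment of the fixpoint rule. You apply Parametric Splitting to the environment's typing of $\clo{\fix{\t}}$, but splitting decomposes a derivation at a sum type into smaller ones --- the direction needed for Weighted Subject \emph{Reduction}. For expansion you must go the other way: $\tder'$ supplies one concrete typing of $\t$ at the root of the recursion tree (from the reduct's top-level closure) and one parametric family coming from the $(\mathit{Fix})$ instance typing the environment entry for $x$, and these have to be \emph{joined} into the single parametric premise over $b<\H$ of the new $(\mathit{Fix})$ rule, again merging the shared context by Parametric Joining. Note finally that the hypotheses $\pcftder\proves\pcfpjudg{\process}{`a}$ and $\erase{\tder'}=\pcftder'$ are not mere bookkeeping, as your ``automatic'' remark suggests: they guarantee that the two derivations being joined have identical \PCF\ erasures, which is a side condition of the Joining Lemma without which the merge is impossible.
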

Proving Proposition~\ref{prop:weisubjexp} requires a careful analysis of the
evolution of the \CEKPCF\ machine, similarly to what happened for Weighted
Subject \emph{Reduction}. But while in the latter it is crucial to be
able to (parametrically) \emph{split} type derivations for terms (and thus closures),
here we need to be able to \emph{join} them:
\begin{lemma}[Joining]
  If $\ep$ is universal, then
  \begin{displaymath}
    \left.
      \begin{array}[c]{r@{\,}l}
        \pcftder_i\proves\judg{\fiv}{\ictx}{`G_i&}{\N_i}{\v}{`t_i}\\
        \erase{\pcftder_1}&=\erase{\pcftder_2} \\
        \sjudg{\fiv}{\ictx&}{`G`<`G_1\uplus`G_2} \\
        \sjudg{\fiv}{\ictx&}{`t_1\uplus`t_2`<`t} \\
        \ijudg{\fiv}{\ictx&}{\N_1+\N_2\leq\M}
      \end{array}
    \right\}\implies
    \judg{\fiv}{\ictx}{`G}{\M}{\v}{`t}
  \end{displaymath}
  % If $\pcftder_i\proves\judg{\fiv}{\ictx}{`G_i}{\N_i}{\v}{`t_i}$,
  % $\erase{\pcftder_1}=\erase{\pcftder_2}$,  
  % $\sjudg{\fiv}{\ictx}{`G`<`G_1\uplus`G_2}$,
  % $\sjudg{\fiv}{\ictx}{`t_1\uplus`t_2`<`t}$ and
  % $\ijudg{\fiv}{\ictx}{\N_1+\N_2\leq\M}$, then
  % $\judg{\fiv}{\ictx}{`G}{\M}{\v}{`t}$.
\end{lemma}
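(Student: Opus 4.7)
The plan is to prove the lemma by case analysis on the shape of $\v$, dual to the Splitting Lemma (Lemma~\ref{lem:cut-sum}). Since $\erase{\pcftder_1}=\erase{\pcftder_2}$, the last typing rule used in each of the two derivations must be the same, so the structural skeleton of the join is forced and only the index-level information needs to be merged; this is where universality of $\ep$ enters.

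The case $\v=\nb$ is immediate, since any modal type for an integer has the form $\Nat{\I}{\J}$ with $\Nat{\I}{\J}\uplus\Nat{\I}{\J}=\Nat{\I}{\J}$, and the two given derivations agree up to subsumption. For $\v=\lambda x.\t$, both derivations end with $(\multimap)$, so $`t_1 = \mtyp{\I_1}{(\freccia{`s}{`t'})}$ and $`t_2 = \mtyp{\I_2}{(\freccia{`s\isubst{\I_1+a}}{`t'\isubst{\I_1+a}})}$, combining into $\mtyp{\I_1+\I_2}{(\freccia{`s}{`t'})}$. I would first apply Index Substitution to the second premise so that it describes the range $\I_1\leq a<\I_1+\I_2$, and then invoke universality of $\ep$ to construct a single index-parametric context, type, and weight for $\t$ on $a<\I_1+\I_2$, defined by case split on whether $a<\I_1$. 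Applying $(\multimap)$ to this merged premise, followed by one subsumption step to accommodate $`G`<`G_1\uplus`G_2$ and $\N_1+\N_2\leq\M$, yields the conclusion.

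The main obstacle is the fixpoint case $\v=\fix{\t}$. Each premise derivation describes a forest of recursive calls of size $\H_i=\fc{0}{\K_i}{\I^{(i)}}$ governed by a child-counting index $\I^{(i)}$. The join must describe a single forest of $\K_1+\K_2$ trees obtained by placing the two original forests side by side, with node addresses of the second shifted by $\H_1$. Using universality of $\ep$, I would define a merged child-counting index $\I$ whose value at $b$ agrees with $\I^{(1)}$ for $b<\H_1$ and with the appropriate shift of $\I^{(2)}$ for $\H_1\leq b<\H_1+\H_2$, so that $\fc{0}{\K_1+\K_2}{\I}=\H_1+\H_2$. Then, via Strengthening and Index Substitution, the body premise of $\pcftder_2$ can be shifted by $\H_1$ and joined with that of $\pcftder_1$ into a single premise valid on $b<\H_1+\H_2$, from which a single $(\textit{Fix})$ rule produces the desired judgement up to a final subsumption.

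The delicate point is verifying that this shift interacts correctly with the nested forest cardinality $\fc{b+1}{a}{\I}$ appearing in the second premise of $(\textit{Fix})$: the $a$-th child of a node $b$ in the merged forest must coincide with the $a$-th child of the corresponding node in either the first or the second original forest, according to whether $b<\H_1$. This reduces to a direct but tedious index-algebraic identity, readily expressible once universality of $\ep$ allows conditional indices. The remaining bookkeeping is handled routinely by Context Weakening, Strengthening, Index Substitution, and Subsumption.
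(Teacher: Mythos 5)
The paper states the Joining Lemma without giving a proof (only the remark that the two derivations must share the same \PCF\ skeleton, ``because otherwise it would not be possible to unify them into one single type derivation''), so your proposal can only be judged against that evident intent. Your route --- case analysis on the shape of $\v$ dual to Splitting, with the integer case trivial because $\Nat{\I}{\J}\uplus\Nat{\I}{\J}=\Nat{\I}{\J}$, the abstraction case merging the ranges $a<\I_1$ and $\I_1\leq a<\I_1+\I_2$, and the fixpoint case concatenating the two recursion forests with the second shifted by $\H_1$ --- is exactly the dualisation the authors have in mind, and your identification of where universality of $\ep$ is needed (to define index terms by a case split on $a<\I_1$, resp.\ $b<\H_1$) is correct.

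The one point you under-specify is the step ``invoke universality of $\ep$ to construct a single index-parametric context, type, and weight for $\t$ on $a<\I_1+\I_2$''. Unlike Splitting, where the premise of $(\multimap)$ is manipulated wholesale by Strengthening and Index Substitution (which act on an \emph{arbitrary} existing derivation), here you must fuse \emph{two distinct derivations of the body $\t$} --- which is an arbitrary term, not a value, so the induction cannot be on $\v$ alone --- into one derivation whose indices are conditionals. This requires a separate lemma, proved by induction on the common \PCF\ skeleton: at every rule instance the contexts, weights, types, and side conditions of the two derivations are merged into conditional index terms, and the index premises ($\ijudg{\fiv}{\ictx}{\cdot\leq\cdot}$) of the merged derivation must be re-verified semantically from those of the two originals. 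It is precisely here, and not at the top-level case split, that the hypothesis $\erase{\pcftder_1}=\erase{\pcftder_2}$ is consumed; it is also the finite analogue of the Uniformisation proposition stated later in the paper. With that auxiliary merging lemma made explicit (and noting that each $\pcftder_i$ may end with trailing $(\textit{Subs})$ steps that must be absorbed first), your argument goes through; the final subsumption handling $`G`<`G_1\uplus`G_2$, $`t_1\uplus`t_2`<`t$ and $\N_1+\N_2\leq\M$ is as you describe.
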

\begin{lemma}[Parametric Joining]
  Suppose that $\ep$ is universal.
  Then
  \begin{displaymath}
    \left.
      \begin{array}[c]{r@{\,}l}
        \judg{a,\fiv}{a<\I,\ictx}{`D&}{\N}{\v}{`s} \\
        \sjudg{\fiv}{\ictx&}{`G`<\sum_{a<\I}`D} \\
        \sjudg{\fiv}{\ictx&}{\sum_{a<\I}`s`<`t} \\
        \ijudg{\fiv}{\ictx&}{\sum_{a<\I}\N\leq\M}
      \end{array}
    \right\}\implies
    \judg{\fiv}{\ictx}{`G}{\M}{\v}{`t}
  \end{displaymath}
  % If $\judg{a,\fiv}{a<\I,\ictx}{`D}{\N}{\v}{`s}$,
  % $\sjudg{\fiv}{\ictx}{`G`<\sum_{a<\I}`D}$,
  % $\sjudg{\fiv}{\ictx}{\sum_{a<\I}`s`<`t}$ and
  % $\ijudg{\fiv}{\ictx}{\sum_{a<\I}\N\leq\M}$,
  % then $\judg{\fiv}{\ictx}{`G}{\M}{\v}{`t}$.
\end{lemma}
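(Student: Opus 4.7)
The plan is to proceed by case analysis on the value $\v$ --- numeral, lambda abstraction, or fixpoint --- mirroring the proof of the Parametric Splitting Lemma (Lemma~\ref{lem:cut-boundedsum}) but run in reverse. The non-parametric Joining Lemma serves as a sanity-check for the target shape of the assembled derivation.

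For the numeral case $\v = \nb$, the only possible modal type is $\Nat{\inb}{\inb}$, and $\sum_{a<\I}\Nat{\inb}{\inb} = \Nat{\inb}{\inb}$, so the claim follows at once from the hypotheses together with a single application of the $(\textit{Subs})$ rule. For the lambda case $\v = \lambda x.\t$, the parametric derivation must decompose through the $(\multimap)$ rule, so $`s$ has the shape $\mtyp[b]{\J}{\freccia{`k}{`h}}$ and the inner premise types $\t$ under the two bound variables $a<\I$ and $b<\J$. The key step is to reindex this inner premise under a single bound variable $c<\sum_{a<\I}\J$ by expressing $a$ and $b$ as index terms in $c$. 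Because $\ep$ is universal, both the offset $\sum_{a'<a}\J\isubst[a]{a'}$ and its left-inverse are expressible, so Index Substitution (Lemma~\ref{lem:isubst}) performs the reindexing. One application of the $(\multimap)$ rule to the result yields a derivation with type $\mtyp[c]{\sum_{a<\I}\J}{\freccia{`k}{`h}}$, which is exactly $\sum_{a<\I}`s$ up to $`=$, hence a subtype of $`t$ by the third hypothesis; $(\textit{Subs})$ then absorbs the context and weight hypotheses via the second and fourth premises.

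For the fixpoint case $\v = \fix{\t}$, the argument follows the same template but must additionally align the forest cardinalities governing recursive calls. Each $a<\I$ contributes a recursion forest whose node count is a forest-cardinality term, and these forests are concatenated into one combined forest whose total size is expressible thanks to universality of $\ep$. The matching-subtyping premise of the $(\textit{Fix})$ rule, which constrains the type of a recursive call at a node named by $b+1+\bigotriangleup$, is then reindexed so that $b$ ranges over the nodes of the combined forest; the root types across all $a$ are collected into the final $\K$-bound of the type of $\fix{\t}$. A concluding application of $(\textit{Fix})$ and $(\textit{Subs})$ assembles the desired judgement.

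The main obstacle is the fixpoint case: three layers of reindexing interact --- the combining variable $c$, the tree-node variable $b$, and the recursive-call variable $a$ inside each node --- and they must all be aligned across the concatenated forest while preserving the forest-cardinality equalities used by the $(\textit{Fix})$ rule. This is precisely where universality of $\ep$ is indispensable, allowing the required offsets, inverses, and forest-cardinality terms to be defined. Once those index terms are pinned down, what remains are routine applications of Strengthening (Lemma~\ref{lem:strength}), Index Substitution, and Subsumption.
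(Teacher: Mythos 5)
The paper states this lemma without proof, but your argument is precisely the intended one: it runs the paper's proof of Parametric Splitting (Lemma~\ref{lem:cut-boundedsum}) in reverse, with the same case analysis on $\v$, the same reindexing substitutions $a \mapsto a+\sum_{c'<c}\L\isubst[c]{c'}$ and $b \mapsto b+\sum_{c'<c}\H'\isubst[c]{c'}$ (with the auxiliary forest-cardinality index $\H'$ handling the concatenated recursion forests in the fixpoint case), followed by a closing $(\multimap)$ or $(\textit{Fix})$ and a final $(\textit{Subs})$ absorbing the three side hypotheses. You also correctly isolate the one ingredient that is genuinely new relative to splitting --- universality of $\ep$ is needed to express the \emph{inverses} of the reindexing maps (recovering $a$ and $b$ from the combined index $c$), which the splitting direction never requires --- so the proposal is correct.
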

Observe that the Joining Lemma requires the two type derivations to be joined to
have the same \PCF\ ``skeleton''. This is essential, because otherwise it
would not be possible to unify them into one single type derivation.
\paragraph{Completeness for Programs}
We now have all the necessary ingredients to obtain a first completeness
result, namely one about programs (which are terms of type $\NatPCF$).
Suppose that $\t$ is a \PCF\ program such that
$\t\tov^*m$, where $m$ is a natural number. 
By Proposition~\ref{prop:eq}, there is a sequence
of processes
$$
\process_1\tocek\process_2\tocek\ldots\tocek\process_n,
$$
where $\process_1=(\proc{\clo[\emptyset]{\t}}{\emptystack})$
and $\process_n=(\proc{\clo[\emptyset]{m}}{\emptystack})$.
Of course, $\vdash\process_i:\NatPCF$ for every $i$.
For obvious reasons, $\vdash^\ep_{0}\process_{n}:\NatU{m}$.
Moreover, by Weighted Subject Expansion, we can derive
each of $\vdash^\ep_{\I_i}\process_{i}:\NatU{m}$, until
we reach $\vdash^\ep_{\I_1}\process_{1}:\NatU{m}$,
where $\I_1\leq n$ (see Figure~\ref{fig:complprog}
for a graphical representation of the above argument).
%%%%%%%%%%%%%%%%%%%%%%%%%%%%%%%%%%%%%%%%%%%%%%%%%%%%%%%%%%%%
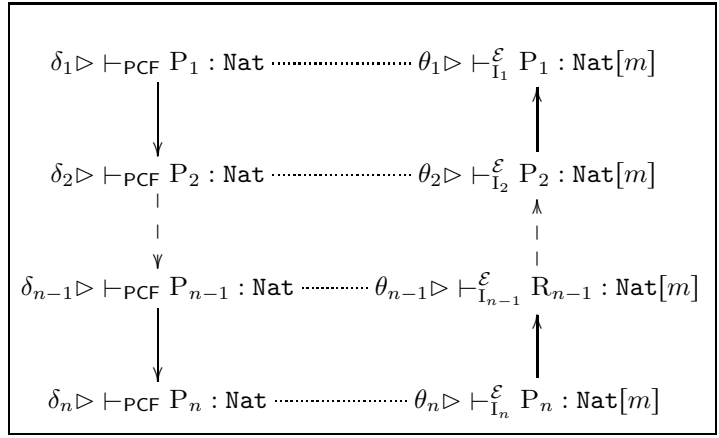
\begin{figure}%
  \centering
  \condinc{
    \fbox{
      \begin{minipage}{.6\textwidth}
        $$\hspace{-5pt}
        \xymatrix{
          \pcftder_1\proves\pcfpjudg{\process_1}{\NatPCF}\ar[d]\ar@{.}[r] &
          \tder_1\proves\ejudg{}{\I_1}{\process_1}{\NatU{m}} \\
          \pcftder_2\proves\pcfpjudg{\process_2}{\NatPCF}
          \ar@{-->}[d]\ar@{.}[r] &
          \tder_2\proves\ejudg{}{\I_2}{\process_2}{\NatU{m}}\ar[u] \\
          \pcftder_{n-1}\proves\pcfpjudg{\process_{n-1}}{\NatPCF}\ar[d]\ar@{.}[r] &
          \tder_{n-1}\proves\ejudg{}{\I_{n-1}}{\R_{n-1}}{\NatU{m}}\ar@{-->}[u]
          \\
          \pcftder_n\proves\pcfpjudg{\process_{n}}{\NatPCF}\ar@{.}[r]&
          \tder_n\proves\ejudg{}{\I_n}{\process_{n}}{\NatU{m}}\ar[u]
        }
        $$
      \end{minipage}}
    }{
      $$\hspace{-5pt}
      \xymatrix{
        \pcftder_1\proves\pcfpjudg{\process_1}{\NatPCF}\ar[d]\ar@{.}[r] &
        \tder_1\proves\ejudg{}{\I_1}{\process_1}{\NatU{m}} \\
        \pcftder_2\proves\pcfpjudg{\process_2}{\NatPCF}
        \ar@{-->}[d]\ar@{.}[r] &
        \tder_2\proves\ejudg{}{\I_2}{\process_2}{\NatU{m}}\ar[u] \\
        \pcftder_{n-1}\proves\pcfpjudg{\process_{n-1}}{\NatPCF}\ar[d]\ar@{.}[r] &
        \tder_{n-1}\proves\ejudg{}{\I_{n-1}}{\R_{n-1}}{\NatU{m}}\ar@{-->}[u]
        \\
        \pcftder_n\proves\pcfpjudg{\process_{n}}{\NatPCF}\ar@{.}[r]&
        \tder_n\proves\ejudg{}{\I_n}{\process_{n}}{\NatU{m}}\ar[u]
      }
      $$
    }
  \caption{Completeness for Programs: sketch of the Proof}
  \label{fig:complprog}
\end{figure}                                               %
%%%%%%%%%%%%%%%%%%%%%%%%%%%%%%%%%%%%%%%%%%%%%%%%%%%%%%%%%%%%
It should be now clear that one can reach the following:
\begin{theorem}[Completeness for Programs]
  \label{theo:cpl-prog}
  Suppose that $\pcftjudg{}{\t}{\NatPCF}$,
  that $\eval{\t}{n}{\nb[m]}$ and that $\ep$ is universal. 
  Then, $\ejudg{}{\inb[k]}{\t}{\NatU{\inb[m]}}$,
  where $k\leq n$.
\end{theorem}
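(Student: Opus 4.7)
The plan is to exploit Proposition~\ref{prop:eq} to turn the call-by-value reduction $\t\tov^*\nb[m]$ into a sequence of \cek\ transitions, and then to climb that sequence backwards using Weighted Subject Expansion (Proposition~\ref{prop:weisubjexp}). By Proposition~\ref{prop:eq}, from $\t\tov^*\nb[m]$ we obtain a \cek\ derivation
\begin{displaymath}
  \process_1 \tocek \process_2 \tocek \cdots \tocek \process_n,
\end{displaymath}
where $\process_1=\proc{\clo[\emptyset]{\t}}{\emptystack}$ and $\process_n=\proc{\clo[\emptyset]{\nb[m]}}{\emptystack}$. The length of this \cek\ derivation is bounded by $n$ (up to a constant that can be absorbed in the statement, since $\t\tov^k\nb[m]$ is mirrored step by step, possibly with a few administrative transitions, in the \cek). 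Hypothesis $\pcftjudg{}{\t}{\NatPCF}$ gives a \PCF\ derivation $\pcftder_1\proves\pcfpjudg{\process_1}{\NatPCF}$; the standard (propositional) subject reduction for \PCF\ yields $\pcftder_i\proves\pcfpjudg{\process_i}{\NatPCF}$ with $\pcftder_i\tocek\pcftder_{i+1}$ for each $i<n$.

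The starting point for the backward climb is the terminal process $\process_n=\proc{\clo[\emptyset]{\nb[m]}}{\emptystack}$. Using the typing rule for constants together with the lifting of \dlpcfv\ to closures, stacks and processes from Section~\ref{sec:sound}, one immediately constructs $\tder_n\proves\ejudg{}{\zero}{\process_n}{\NatU{\inb[m]}}$ such that $\erase{\tder_n}=\pcftder_n$. Then, applying Proposition~\ref{prop:weisubjexp} iteratively to each transition $\process_{i}\tocek\process_{i+1}$ (reading it right-to-left), we obtain refined derivations $\tder_i\proves\ejudg{}{\I_i}{\process_i}{\NatU{\inb[m]}}$ with $\erase{\tder_i}=\pcftder_i$, and $\ijudg{}{\emptyset}{\I_i\leq\I_{i+1}+1}$. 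Since $\I_n=\zero$, an easy induction yields $\ijudg{}{\emptyset}{\I_1\leq\inb[n]}$, and the universality of $\ep$ ensures that all the index terms involved are expressible so that the expansion step really does produce a bona fide \dlpcfv\ derivation at each stage. Note that the picture depicted in Figure~\ref{fig:complprog} is exactly this backward climb.

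Finally, we must translate the conclusion about the initial process $\process_1$ into a statement about the term~$\t$ itself. Since $\process_1=\proc{\clo[\emptyset]{\t}}{\emptystack}$, the derivation $\tder_1$ is, by the process and stack typing rules, built from a \dlpcfv\ typing derivation for $\clo[\emptyset]{\t}$, which for an empty environment is nothing but a derivation for $\t$; the empty stack contributes weight $\zero$. Extracting it gives $\ejudg{}{\inb[k]}{\t}{\NatU{\inb[m]}}$ with $k\leq n$, as required.

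The main obstacle is the backward climb: Proposition~\ref{prop:weisubjexp} is the key lemma, and its proof (not carried out here, but previewed by the Joining and Parametric Joining lemmas) is where universality of $\ep$ is essential, since the only way to rebuild a type derivation for the pre-transition process is to pick index terms describing appropriate partial functions. Once Proposition~\ref{prop:weisubjexp} is in hand, the rest is a clean backward induction, and the conversion from a process derivation to a term derivation is immediate from the shape of the rules lifting \dlpcfv\ to processes.
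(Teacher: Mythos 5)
Your proposal follows exactly the paper's argument: use Adequacy (Proposition~\ref{prop:eq}) to obtain the \cek\ trace from $\process_1=\proc{\clo[\emptyset]{\t}}{\emptystack}$ to $\process_n=\proc{\clo[\emptyset]{\nb[m]}}{\emptystack}$, type the terminal process with weight $\zero$, climb the trace backwards with Weighted Subject Expansion (keeping the erasures aligned with the \PCF\ derivations $\pcftder_i$ so the proposition applies), and read off a term derivation from the process derivation for $\process_1$ --- which is precisely the content of Figure~\ref{fig:complprog}. The argument is correct and essentially identical to the paper's, including where universality of $\ep$ is invoked.
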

\paragraph{Uniformisation and Completeness for Functions}
Completeness for programs, however, is not satisfactory: the fact
(normalising) \PCF\ terms of type $\NatPCF$ can all be analysed by 
\dlpcfv\ is not so surprising, and other type systems (like
non-idempotent intersection types~\cite{DeCarvalho}) have comparable expressive
power. Suppose we want to generalise relative completeness 
to first-order functions: we would like to prove that
every term $\t$ having a \PCF\ type $\NatPCF\arr\NatPCF$ 
(which terminates when fed with any natural number)
can be typed in \dlpcfv. How could we proceed?
First of all, observe that the argument in Figure~\ref{fig:complprog} 
could be applied to all \emph{instances} of $\t$, namely to all
terms in $\{\t n\mid n\in\int\}$. This way one can
obtain, for every $n\in\int$, a type derivation $\pcftder_n$ of
$$
\ejudg{}{\I_n}{\t}{\mtyp{\J_n}{\freccia{\NatU{\K_n}}{\NatU{\H_n}}}}
$$
where $\J_n$ can be assumed to be $1$, while $\K_n$ can be assumed to be
$n$. Moreover, the problem of obtaining $\pcftder_n$ from $n$ is recursive, i.e.,
can be solved by an algorithm. Surprisingly,
the infinitely many type derivations in $\{\pcftder_n\mid n\in\int\}$
can be turned into one:
\begin{proposition}[Uniformisation of type derivations]
  Suppose that $\ep$ is universal and that $\{\pcftder_n\}_{n\in\int}$ is a recursively enumerable class
  of type derivations satisfying the following constraints:
  \begin{varenumerate}
  \item
    For every $n\in\int$,
    $\tder_n\proves\ejudg{}{\I_n}{\t}{`s_n}$;
  \item
    all derivations have the same skeleton
    %for every $n,m\in\int$, $\erase{\tder_n}=\erase{\tder_m}$.
  \end{varenumerate}
  Then there is a type derivation
  $\tder\proves\judg{a}{\emptyset}{\emptyset}{\I}{\t}{`s}$ such that 
  $\eijudg{\I\isubst{\inb}=\I_n}$
  and $\eijudg{`s\isubst{\inb}`=`s_n}$ for all~$n$.
\end{proposition}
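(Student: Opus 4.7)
The plan is to reason by induction on the shared skeleton of the family $\{\tder_n\}_{n\in\int}$. By the second hypothesis, the derivations agree on the tree of rule applications: they instantiate the same typing or subtyping rule at each corresponding position, bind the same set $\fiv$ of free index variables at each node, and differ only in the index-term annotations, namely the weights, the index parameters of types, the bounds of bounded sums, and the indices appearing in constraints and in the index-inequality premises of the subsumption rule. Since a derivation tree is finite, only finitely many index-slots have to be uniformised across $n\in\int$.

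At every such slot I obtain a family $(\I_n)_{n\in\int}$ of index terms over $\Theta$, each with free variables drawn from the ambient $\fiv$. Because the class $\{\tder_n\}$ is recursively enumerable, the map $n \mapsto \I_n$ is computable at the syntactic level; composing it with the standard semantic evaluation $\itp[`r]{\cdot}$, which is itself partial recursive once $\ep$ fixes the meaning of the symbols in $\Theta$, yields a partial recursive function $(n,`r) \mapsto \itp[`r]{\I_n}$ from $\int\times\int^{|\fiv|}$ to $\int$. Universality of $\ep$ then supplies an index term in the fresh variable $a$ and the variables of $\fiv$, call it $\I$, such that $\eijudg{\I\isubst{\inb} = \I_n}$ for every $n$. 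Performing this uniformisation simultaneously at every slot of the skeleton, and keeping the skeleton unchanged, produces a candidate derivation $\tder$ whose conclusion has the required parametric form.

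What remains is to check that $\tder$ is actually derivable in \dlpcfv. At each rule instance the premises are of three kinds, and I would dispatch them as follows: (i) typing sub-premises follow from the induction hypothesis applied to the corresponding sub-skeletons; (ii) subtyping sub-premises follow from a parallel induction on the subtyping derivations, uniformising their index-inequality leaves in exactly the same way; and (iii) the bare index inequalities $\ijudg{(a,\fiv)}{\ictx}{\I\leq\J}$ follow from the semantic observation that any assignment $`r'$ of $(a,\fiv)$ factors as $`r'(a)=n$ together with its restriction to $\fiv$, so that satisfaction of the parametric inequality at $`r'$ reduces to satisfaction of its $n$-th instance, which holds because $\tder_n$ is a valid derivation.

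The main obstacle is that the parametric $\I$ produced by universality need not wear the syntactic shape explicitly required by some rules, notably the bounded sums in $(\multimap)$ and the forest cardinalities in \textit{Fix}. However, the typing rules depend on indices only up to the semantic judgements $\ejudg{\cdots}$ and $\sjudg{\cdots}$, which are themselves defined through $\itp{\cdot}$; so it suffices to invoke universality, once the inner slots of such a schema are already uniformised, to produce an outer index term whose meaning at each $a:=\inb$ agrees with the required bounded sum or forest cardinality of the $n$-th derivation. Since bounded sums and forest cardinalities are partial recursive in their parameters, this step is no more problematic than the basic uniformisation above and closes the induction.
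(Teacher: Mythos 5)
The paper does not actually spell out a proof of this proposition---it only remarks that uniformisation is an extreme form of joining and that universality of $\ep$ is the crucial ingredient---and your proposal fleshes out exactly that intended argument: induction on the common skeleton, representation of each computable family of index annotations by a single parametric index term via universality of $\ep$, and pointwise semantic verification of the inequality side conditions by restricting an assignment of the fresh variable to its value $n$. The one point to treat more carefully is that the operations $\uplus$ and $\sum_{a<\I}$ on modal types and contexts are \emph{syntactically} partial (the summands must literally be shifted instances of a common linear type), so after uniformising premises independently you may need to interleave subsumption steps---licensed because subtyping reduces to semantic index inequalities, which hold pointwise by validity of each $\tder_n$---to restore the exact syntactic shapes that rules such as $(\textit{App})$, $(\multimap)$ and $(\textit{Fix})$ demand; with that caveat your argument goes through.
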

Uniformisation of type derivations should be seen as an extreme form
of joining: not only a finite number of type derivations for the
same term can be unified into one, but even any recursively
enumerable class of them can. Again, the universality
of $\ep$ is crucial here. We are now ready to give the following:
\begin{theorem}[Completeness for functions]
  \label{theo:cpl-fct}
  Suppose that $\pcftjudg{}{\t}{\NatPCF\arr\NatPCF}$,
  that $\eval{\t\,\nb}{k_n}{\nb[m_n]}$ for all $n`:\int$
  and that $\ep$ is universal. 
  Then, there is an index~\H\ %(depending only on~$a$) 
  such that
  $\judg{a}{\emptyset}{\emptyset}{\I}{\t}{\mtyp[b]{\one}{\freccia{\NatU{a}}{\NatU{\H}}}}$,
  where $\eijudg{\I\isubst{\inb}\leq \inb[k]_n}$ and
  $\eijudg{\H\isubst{\inb}=\inb[m]_n}$.
\end{theorem}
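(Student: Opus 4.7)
The plan is to apply Completeness for Programs (Theorem~\ref{theo:cpl-prog}) to each instance $\t\,\nb$ of the function, invert the (\textit{App}) rule of each resulting derivation to isolate a type derivation for $\t$ alone, and collapse this countable family into a single parametric derivation via the Uniformisation Proposition. As a preliminary step, fix a canonical \PCF\ typing derivation $\pcftder$ for $\t$, which exists since $\pcftjudg{}{\t}{\NatPCF\arr\NatPCF}$.

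For each $n\in\int$, the term $\t\,\nb$ is a \PCF\ program reducing to $\nb[m_n]$ in $k_n$ steps, so Theorem~\ref{theo:cpl-prog} yields a \dlpcfv\ derivation $\pi_n\proves\ejudg{}{\inb[l_n]}{\t\,\nb}{\NatU{\inb[m_n]}}$ with $l_n\leq k_n$. Up to a final subsumption step, $\pi_n$ must end with (\textit{App}), whose premises type $\t$ with some $\mtyp[b]{\one}{(\freccia{`s_n}{`t_n})}$ and $\nb$ with $`s_n\isubst[b]{\zero}$. Since the only effective typing of the numeral $\nb$ with zero weight is $\NatU{\inb}$ via rule~$(n)$, subsumption normalises the application so that $`s_n = \NatU{\inb}$ and $`t_n = \NatU{\inb[m_n]}$, extracting a sub-derivation $\tder_n\proves\ejudg{}{\inb[l_n]}{\t}{\mtyp[b]{\one}{(\freccia{\NatU{\inb}}{\NatU{\inb[m_n]}})}}$ whose \PCF\ erasure is $\pcftder$.

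Since the whole construction is effective in $n$, the family $\{\tder_n\}_{n\in\int}$ is recursively enumerable and shares the common skeleton $\pcftder$. The Uniformisation Proposition then produces a single derivation $\tder\proves\judg{a}{\emptyset}{\emptyset}{\I}{\t}{`s}$ with $\eijudg{\I\isubst{\inb}=\inb[l_n]}$ and $\eijudg{`s\isubst{\inb}`=\mtyp[b]{\one}{(\freccia{\NatU{\inb}}{\NatU{\inb[m_n]}})}}$ for every~$n$. Because $\erase{\tder}=\pcftder$, the type $`s$ is forced into the shape $\mtyp[b]{\one}{(\freccia{\NatU{a}}{\NatU{\H}})}$ for a unique index $\H$; the two specialisation equalities above then give $\eijudg{\H\isubst{\inb}=\inb[m_n]}$, and the bound $l_n\leq k_n$ gives $\eijudg{\I\isubst{\inb}\leq\inb[k_n]}$, which is exactly the conclusion.

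The main obstacle is guaranteeing the common-skeleton hypothesis of Uniformisation. The completeness proof underlying Theorem~\ref{theo:cpl-prog} may a priori produce, for different values of $n$, derivations that disagree on the placement of subsumption steps or on the syntactic form of their index annotations, so a canonicalisation step---either by refining the algorithmic content of the completeness proof to proceed uniformly with respect to the fixed $\pcftder$, or by a post-hoc skeleton-normalisation that leaves the weight and the type unchanged---must be carried out. Once this is in place, the remainder is a routine combination of the results already established.
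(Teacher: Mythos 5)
Your proposal follows essentially the same route as the paper: apply completeness for programs to each instance $\t\,\nb$, extract from the resulting derivation one typing $\t$ itself with $\mtyp[b]{\one}{(\freccia{\NatU{\inb}}{\NatU{\inb[m_n]}})}$, and collapse the recursively enumerable family into a single parametric derivation via Uniformisation. The common-skeleton worry you flag is already discharged by the statement of Weighted Subject Expansion, whose output derivation erases to the given \PCF\ derivation, so all the $\tder_n$ automatically share the skeleton determined by the single \PCF\ typing of $\t$ fixed at the outset.
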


%%% Local Variables: 
%%% mode: latex
%%% TeX-master: "main"
%%% End: 

%%%%%%%%%%%%%%%%%%%%%%%%%%%%%%
\section{Further Developments}\label{sec:dev}
%%%%%%%%%%%%%%%%%%%%%%%%%%%%%%
Relative completeness of \dlpcfv, especially in its stronger form (Theorem~\ref{theo:cpl-fct}) can be read as follows.
Suppose that a (sound), finitary formal $\mathsf{C}$ system deriving judgements in the form $\sjudg{\fiv}{\ictx}{\I\leq\J}$ is fixed and ``plugged'' into \dlpcfv.
What you obtain is a sound, but necessarily incomplete formal system, due to G\"odel's incompleteness. However, this incompleteness is \emph{only} due to $\mathsf{C}$ and not to the rules of \dlpcfv, which are designed so as to reduce the problem of proving properties
of programs to checking inequalities over $\ep$ \emph{without any loss of information}.

In this scenario, it is of paramount importance to devise techniques to \emph{automatically} reduce the problem
of checking whether a program satisfies a given intentional or extensional specification to the
problem of checking whether a given set of inequalities over an equational program $\ep$ hold. Indeed, many
techniques and concrete tools are available for the latter problem (take, as an example,
the immense literature on SMT solving), while the same cannot be said about the former problem.
The situation, in a sense, is similar to the one in the realm of program logics for imperative programs, where
logics are indeed very powerful~\cite{Cook78}, and great effort have been directed to devise efficient
algorithms generating weakest preconditions~\cite{deBakker}.

Actually, at the time of writing, the authors are actively involved in the development of  
\emph{relative type inference} algorithms for both \dlpcfn\ and \dlpcfv, which can be
seen as having the same role as algorithms computing weakest preconditions. This is however out of the scope
of this paper.

%%% Local Variables: 
%%% mode: latex
%%% TeX-master: "main"
%%% End: 

%%%%%%%%%%%%%%%%%%%%%
\section{Conclusions}\label{sec:concl}
%%%%%%%%%%%%%%%%%%%%%
Linear dependent types are shown to be applicable to the analysis of intentional and
extensional properties of functional programs when the latter are call-by-value evaluated.
More specifically, soundness and relative completeness results are proved for
both programs a and functions. This generalises previous 
work by Gaboardi and the first author~\cite{DLG11}, who proved similar results in the call-by-name
setting. This shows that linear dependency not only provides an expressive formalism, but is
also robust enough to be adaptable to calculi whose notions of reduction are significantly
different (and more efficient) than normal order evaluation.

Topics for future work include some further analysis about the applicability of linear dependent
types to languages with more features, including some form of inductive data types, or
ground type references.

%%% Local Variables: 
%%% mode: latex
%%% TeX-master: "main"
%%% End: 

\bibliographystyle{abbrv}
\bibliography{bibli}

\end{document}